\documentclass[preprint,11pt,authoryear]{elsarticle}
\usepackage{tikz}
\usepackage{xcolor}
\usepackage{amsfonts}
\usepackage{amsmath}
\usepackage{amsthm}
\usepackage{amssymb}
\usepackage{times}
\usepackage{subfig}
\usepackage{multirow}
\usepackage{setspace}
\usepackage{enumerate}
\usepackage{natbib}
\usepackage{color}
\usepackage{colortbl}
\usepackage{ wasysym }
\usepackage{verbatim}
\usetikzlibrary{shapes,arrows,calc,positioning}
 \usepackage{lscape}
\usepackage{array}
\usepackage{setspace}
\usepackage{url}
\usepackage{arydshln}

\usepackage{caption}
\usepackage{rotating}
%\usepackage[notcite, notref]{showkeys}

%\usepackage{tablefootnote}

%\doublespacing

\newcolumntype{x}[1]{%
>{\centering\hspace{0pt}}p{#1}}%

\definecolor{Gray}{gray}{0.9}
\newcolumntype{g}{>{\columncolor{Gray}}c}

\setcounter{MaxMatrixCols}{10}

\def\iid{\buildrel {\rm i.i.d.} \over \sim}

\def\i.i.d.{\buildrel {\rm i.i.d.} \over \sim}

\def\cw#1 { \overset{\mathbb{P}}{\underset{#1}{\longrightarrow}} }
\def\Real{\mathbb{R}}
\def\Natu0{\mathbb{N}_0}
\def\P#1{{\mathbb{P}}\left(#1\right)}

\def\E#1{{\mathbb E}\left[#1\right]}

\def\Var#1{{\mathrm Var}\left(#1\right)}

\def \rcov#1#2 {{\rm cov}_{#1}\left( #2\right)}

\DeclareMathOperator*{\argmin}{arg\,min}

\newtheorem{example}{Example}

\newtheorem{algo}{Algorithm}

\oddsidemargin0in
\textwidth6.5in
\addtolength{\topmargin}{-.75in}
\textheight 8.5in
\newtheorem{lemma}{Lemma}
\newtheorem{theorem}{Theorem}
\newtheorem{definition}{Definition}
\newtheorem{corollary}{Corollary}
\newtheorem{remark}{Remark}

\newtheorem*{toy*}{Toy Model}

\newtheorem{criteria}{Criteria}
\newtheorem{BMS}{Bonus-Malus System}

\definecolor{Gray}{gray}{0.9}

\begin{document}
\begin{frontmatter}
\title{Double-Counting Problem of the Bonus-Malus System}%{On the Efficiency of Bonus-Malus System}

%\author[IH]{Woojoo Lee}
%\ead{lwj221@gmail.com}
%\address[IH]{Department of Statistics, Inha University, 235 Yonghyun-Dong, Nam-Gu, Incheon 402-751, Korea.}
%\author[BSNU]{Sojung C. Park}
%\ead{sojungpark@snu.ac.kr}
%\address[BSNU]{College of Business Administration, Seoul National University, Gwanak-gu, Seoul 151-916, Republic of Korea.}

\author[EHM]{Rosy Oh\corref{cor1}}
\ead{rosy.oh5@gmail.com}
\address[EHM]{Institute of Mathematical Sciences, Ewha Womans University, Seodaemun-Gu, Seoul, Republic of Korea}

\author[KS]{Kyung Suk Lee\corref{cor1}}
\ead{leeks@kongju.ac.kr}
\address[KS]{Department of Physics Education, Kongju National University, Gongju, Republic of Korea}

\author[BSNU]{Sojung C. Park\corref{cor1}}
\ead{sojungpark@snu.ac.kr}
\address[BSNU]{College of Business Administration, Seoul National University, Gwanak-gu, Seoul, Republic of Korea}

%\author[WI]{Peng Shi\corref{cor2}}
%\ead{pshi@bus.wisc.edu}

\author[EH]{Jae Youn Ahn\corref{cor2}}
\ead{jaeyahn@ewha.ac.kr}
\address[EH]{Department of Statistics, Ewha Womans University, Seodaemun-Gu, Seoul, Republic of Korea}
%\address[WI]{Wisconsin School of Business, University of Wisconsin-Madison, Madison, WI 53706, USA.}
\cortext[cor1]{First Authors}
\cortext[cor2]{Corresponding Authors}

\begin{abstract}
The bonus-malus system (BMS) is a widely used premium adjustment mechanism based on policyholder's claim history. Most auto insurance BMSs assume that policyholders in the same bonus-malus (BM) level share the same {\it a posteriori} risk adjustment. This system reflects the policyholder's claim history in a relatively simple manner. However, the typical system follows a single BM scale and is known to suffer from the double-counting problem: policyholders in the high-risk classes in terms of {\it a priori} characteristics are penalized too severely \citep{Taylor1997, Pitrebois2003}. Thus, \citet{Pitrebois2003} proposed a new system with multiple BM scales based on the {\it a priori} characteristics. While this multiple-scale BMS removes the double-counting problem, it loses the prime benefit of simplicity.
Alternatively, we argue that the double-counting problem can be viewed as an inefficiency of the optimization process. Furthermore,
we show that the double-counting problem can be resolved by fully optimizing the BMS setting, but retaining the traditional BMS format.

\end{abstract}

\begin{keyword}
Bonus-malus system \sep Ratemaking \sep Double counting \sep Optimization

JEL Classification: C300
\end{keyword}

\end{frontmatter}

\vfill

\pagebreak

\vfill

\pagebreak

%\section*{Double-Counting Problem of the Bonus-Malus System}%{On the Efficiency of Bonus-Malus System}

\section{Introduction}

The bonus-malus system (BMS), also called the merit rating system, or no-claim bonus system, is a widely used experience rating system in auto insurance. The system corrects {\it a priori} risk classification by reflecting the policyholder's claim history {\it a posteriori}. In a typical auto insurance rating system, the rate is first determined by the observable characteristics of the policyholder {\it a priori}, and then adjusted under the BMS from the policyholder's unobservable information embedded in the claim history.
Bonus-malus (BM) is a very important rating factor. \citet{Lemaire1998} states that ``if insurers are allowed to use only one rating variable, it should be some form of merit-rating," and this is indeed true in practice.

The design and implementation of the BMS vary significantly across countries and insurers, involving from nonexistent to highly complex systems. Highly complex systems may achieve the goal of accurately predicting future claims but might increase the computational burden of insurers and make their communication with policyholders and regulators difficult. They may also challenge the effectiveness of supervision and increase consumer complaints. This study suggests that the BMS might enable a simple presentation to insurance consumers and still be as accurate as a very complex system by resolving the so-called double-counting problem of the traditional systems.

A traditional Asian-European auto insurance BMS (the traditional BMS hereafter) consists of finite discrete risk levels, rules for transition between the levels, and {\it BM relativity} assigned to each level. Policyholders can navigate between the risk levels by following the transition rules based solely on their claim history. A single BM relativity or scale assigned to each level determines the magnitude of the bonus (discount) or malus (penalty). This BM relativity revises the rate based on {\it a priori} risk classification (see \citet{LemaireZi1994} for numerous examples).

This simple yet effective system is found to have a challenging double-counting problem due to a technical bias in the traditional BMS from the {\it a priori} characteristics such as age and gender affecting both the {\it a priori} and {\it a posteriori} risk classifications. For example, a young male driver is first penalized by the age and gender factors due to the high probability of having claims, and then again penalized from the realized claim records and the BMS.
This critical problem was first raised by \citet{Taylor1997} in their seminal paper, and then well presented by \citet{Brouhns2003, Pitrebois2003, Lemaire2015}. In an attempt to resolve the double-counting problem, \citet{Pitrebois2003} proposed a fair system, where each {\it a priori} class has its own BMS (individualized BMS hereafter), through a simple example, and \citet{Brouhns2003} suggested a computer-intensive individualized BMS method incorporating the interactions between the {\it a priori} and {\it a posteriori} mechanisms.

Since the 1990s, several countries have deregulated their BMS, moving away from a strict national system in which all insurers use the same levels, transition rules, and BM relativities. European and Asian insurers began to use intermediate or total freedom systems, designing their own systems that diverged significantly from the traditional systems \citep{Lemaire1998}.
Considering the effectiveness and significance of the BMS for auto insurance rates, deregulation of the BMS can be viewed as promoting price competition among insurers. In addition, deregulation of the BMS enables insurers to utilize more complex systems, such as those suggested by \citet{Brouhns2003, Pitrebois2003}, and thus resolve the double-counting problem affecting actuarial equity.

However, this also indicates the transition from simple to more complex and opaque systems. Most insurers do not have a clear transition rule or relativities as seen in simple national systems. Therefore, in countries with a total freedom system, the only way for consumers to know the impact of new claim(s) on their insurance rates is from the quotes of insurers after the claim(s) are settled.
Some regulators and consumers have found the lack of transparency in the BMS affecting consumer protection, with increased complexity making the supervision of rates a challenging task. For example, the New York Insurance Law section 2334 states that it ''shall continue to encourage competition among insurers, but shall discourage merit rating plan provisions which may tend to create confusion or misunderstanding among insureds. Section 2334 further requires insurers to clearly present the surcharge due to claim history in the declarations page or premium bill for transparency. The European Consumer Organization BEUC asks all insurers to mandate the BMS, requiring them to present their BM policies in a standardized template in order to increase policy transparency and comparability. Furthermore, in spite of the double-counting problem, several Asian countries still follow the national system in the traditional BMS format, possibly for socio-economic or practical reasons.

This paper presents a simple BMS system that reduces the double-counting problem and yet maintains its simple traditional format. We propose a new optimization process to set the {\it
a priori} rate and a BM relativity attached to each BM level after revisiting the classical BMS considered in \citep{Taylor1997, Pitrebois2003, Lemaire2015, Chong}. Thus, we have a new premium that can be considered appropriate for the classical BMS model, but with modified numerical values for the {\it a priori} rate and BM relativities.
The main difference of the proposed premium is that the {\it a priori} rate contributes to the optimization process as well as BM relativities. In the classical premium, only BM relativities contribute to the optimization process, and the {\it a priori} rate is set to a fixed value based on a statistical estimation procedure.
The new premium's immediate result is better prediction of future premiums compared to the classical premium. Surprisingly, we find that the new premium hardly faces the double-counting problem. We explain that the double-counting problem in the classical premium arises from inefficiency in the optimization process, where the {\it a priori} rate is not allowed to participate in the optimization process.% is now allowed to participate the optimization process.

The remainder of this paper is organized as follows.
Section 2 reviews the frequency random effects model and the classical BMS.
Section 3 investigates the double-counting problem in the classical BMS.
Specifically, we formally define the double-counting
problem mathematically and propose an index to quantify the problem.
In Section 4, we propose a new premium under the full optimization process.
We also explain the double-counting problem in the classical BMS as mainly due to the constrained {\it a priori} rate,
which we allow to participate in the new optimization process in this study.
Using a real data analysis in Section 5, we show that the double-counting problem in the classical premium can be resolved by adjusting the {\it a priori} rate and BM relativities.

\section{Review of the Classical BMS}

\subsection{Definition and Notations}

 We consider the policyholders' portfolio in the short-term insurance context, where the policyholders can decide whether or not to renew their policy at the end of each policy year, and insurers can adjust their premium at the beginning of each policy year based on the policyholders' claim history. We denote $\mathbb{N}$, $\mathbb{N}_0$, $\Real$, and $\Real^+$ as the sets of natural numbers, non-negative integers, real numbers, and positive real numbers, respectively.
We define $N_{i,t}$ to indicate the $i$-th policyholder's number of claims in the $t$-th policy year.
The actuarial science literature often refers to $N_{i,t}$ as the insurance claim \textit{frequency}. Furthermore, we define the $i$-th policyholder's claim history at the end of year $t$ as
\[
\mathcal{F}_{i,t}:=\left\{n_{i1} \cdots n_{i,t}\right\}.
\]

Let $\boldsymbol{X}_i$ be the {\it a priori} risk characteristics of the $i$-th policyholder observable at the time of contract, and $\Theta_i$ be the residual effect, or the policyholder's characteristics not included in the {\it a priori} risk classification.
In the BMS context, we distinguish between the \textit{a priori} and \textit{a posteriori} risk classification. The {\it a priori} rate function is based solely on the policyholders' {\it a priori} risk characteristics, without considering their claim history $\mathcal{F}_{i,t}$, and the {\it a posteriori} risk classification is related to the policyholders' residual effect $\Theta_i$ based on their claim history $\mathcal{F}_{i,t}$.

Following the classical settings in the BMS,
we assume constant {\it a priori} risk characteristics and residual effects, making it convenient to analyze the stationary frequency distribution.
Thus, for $\boldsymbol{X}_i$ and $\Theta_i$, we do not have the subscript $t$.
To model the insurance claim frequencies, we use the generalized linear model (GLM) format applied to Poisson distribution.
Nonetheless, this paper's arguments can apply to any counting distribution, including
the Poisson and negative binomial distributions as well as the distributions for excess zeros, such as the zero-inflated or hurdle models \citep{Nelder1989, yip2005modeling, deJong2008}.

\subsection{Frequency Random Effects Model}\label{sec.3}

This section presents the general frequency modeling framework. Assume that $\boldsymbol{x}_{\kappa}$ defines the {\it a priori} risk characteristics of the $\kappa$-th risk class, and $w_\kappa$ is the weight of the corresponding risk class; that is,
  \begin{equation}\label{eq.1}
  w_\kappa:= \P{\boldsymbol{X}_{i}=\boldsymbol{x}_{\kappa}}, \quad \kappa = 1,\cdots, \mathcal{K}.
  \end{equation}
Denote $\Lambda_i$ as the {\it a priori} rate for the $i$-th policyholder, determined by the {\it a priori} risk characteristics based on equation
  \begin{equation}\label{eq.2}
        \Lambda_{i}=\eta^{-1} \left(\boldsymbol{X}_{i}\boldsymbol{\beta}\right),
  \end{equation}
where $\eta(\cdot):\Real \mapsto \Real_+$ is the link functions and $\boldsymbol{\beta}$ is the vector of parameters to be estimated. The residual effect $\Theta_i$, assumed to be independent of the {\it a priori} risk characteristics
$\boldsymbol{X}_{i}$, has the distribution $G$:
   \begin{equation}\label{eq.4}
  \Theta_i \iid G, \quad i=1, \cdots, I,
  \end{equation}
where $G$ denotes the continuous marginal distribution function for $\Theta$. We use $g$ to denote the density function corresponding to $G$.
Further, for identification purposes, we assume that
   \begin{equation}\label{eq.3}
       \E{\Theta_i} = 1.
  \end{equation}
Now, we are ready to present the frequency random effect model, which is assumed throughout the study unless specified otherwise.
\bigskip

\noindent
{{\bf Frequency Random Effect Model.} \it
Consider the {\it a priori} risk characteristics described in \eqref{eq.1} and \eqref{eq.2}, and the residual effect in \eqref{eq.4} and \eqref{eq.3}. We assume that the {\it a priori} risk characteristics and the residual effect are independent.
Assuming the {\it a priori} risk characteristics $\boldsymbol{X}_{i}=\boldsymbol{x}_{\kappa}$
and residual effect $\Theta_i=\theta_i$,
we construct the frequency model for $N_{i,t}$ using the count regression model
       \begin{equation}\label{eq.n}
        N_{i,t}\big\vert \left(\boldsymbol{X}_{i}, \Theta_i\right)=\left(\boldsymbol{x}_{\kappa}, \theta_i\right) \iid { F}(\lambda_{\kappa}\theta_i),
        \end{equation}
where the non-negative integer distribution function $F$ has the mean parameter of $\lambda_{\kappa}\theta_i$ with
   $\lambda_{\kappa}= \eta^{-1}\left(\boldsymbol{x}_{\kappa}\boldsymbol{\beta}\right)$.
Clearly, the regression model in \eqref{eq.n} can be similarly written as
          \begin{equation*}%\label{eq.n2}
        N_{i,t}\big\vert \left(\Lambda_{i}, \Theta_i\right)=\left(\lambda_{\kappa}, \theta_i\right) \iid
        {F}(\lambda_{\kappa}\theta_i),
        \end{equation*}
   for the random variable $\Lambda_i$ independent of $\Theta_i$ and satisfying $\P{\Lambda_i=\lambda_\kappa}=w_\kappa$.
}
\bigskip

 The mean of $N_{i, t+1}$ for the given claim history up to time $t$, denoted by $M_{\rm [Bayes]}(\lambda_\kappa, \mathcal{F}_{i,t})$, is the {\it Bayesian premium} in an insurance setting. The formal mathematical definition of the Bayesian premium is
\begin{equation}\label{eq.6}
\begin{aligned}
M_{\rm [Bayes]}\left(\lambda_\kappa, \mathcal{F}_{i,t}\right)&:=\E{N_{i,t+1}\big\vert \Lambda_i=\lambda_{\kappa}, \mathcal{F}_{i,t}} \\
&= \E{\E{N_{i,t+1}\big\vert \Theta_i, \Lambda_i=\lambda_\kappa,  \mathcal{F}_{i,t}}\big\vert \Lambda_i=\lambda_\kappa, \mathcal{F}_{i,t}}\\
&=\lambda_\kappa\E{\Theta_{i}\big\vert \Lambda_i=\lambda_\kappa, \mathcal{F}_{i,t}},
\end{aligned}
\end{equation}
where the {\it Bayesian estimator of the residual effect}, $\E{\Theta_i\big\vert  \Lambda_i=\lambda_\kappa, \mathcal{F}_{i,t}}$, can be interpreted as the residual effect estimator $\Theta_i$ for the given {\it a priori} rate $\lambda_\kappa$ and claim history up to time $t$.
Thus, we can say that the Bayesian premium is {\it unbiased} in that
\begin{equation}\label{eq.41}
\begin{aligned}
\E{M_{\rm [Bayes]}\left(\Lambda_i, \mathcal{F}_{i,t}\right)\big\vert \Lambda_i=\lambda_\kappa}
&=\lambda_\kappa.\\
\end{aligned}
\end{equation}
Similarly, we can say that the Bayesian estimator of the residual effect is {\it unbiased} if the estimator mean conditioned on the {\it a priori} risk characteristics is 1:
\begin{equation}\label{eq.42}
\E{\E{\Theta_i\big\vert \Lambda_i, \mathcal{F}_{i,t}}\big\vert \Lambda_i=\lambda_\kappa} = 1.
\end{equation}
For brevity and clarity, we omit the subscript $i$ in the remaining part of the study.

\subsection{BMS and Optimal Relativity}\label{OR}

 In the classical BMS, the $-1/+h$ transition rule is common, where
one BM level is lowered for a claim-free year and $h$ BM levels are increased for a claim. We assume the $-1/+h$ transition rule in the remainder of this study unless specified otherwise.\footnote{Note that our assumption of the $-1/+h$ transition rule is only for expository convenience, with the main idea of this study valid for the general transition as long as it is based on the Markovian claim history.} From this transition rule, the BM level for each policyholder would evolve in a Markov chain. Especially, we use $L$ as a random variable representing the BM level (from $1$ to $z$) for a randomly chosen policyholder in the stationary state. The stationary distribution of BM level is determined as
\[
\P{L=\ell}
 =\sum\limits_{\kappa \in\mathcal{K}} w_{\kappa} \int \pi_\ell\left(\lambda_\kappa \theta \right) g(\theta){\rm d}\theta, \qquad \hbox{for} ~~\ell=1, \ldots, {z},
\]
where $\pi_\ell\left(\lambda_\kappa\,\theta \right)$ is the stationary distribution for the policyholder, with frequency $\lambda_\kappa\,\theta$ expected in level $\ell$.
The relativity associated with this BM level $\ell$ is denoted as ${\gamma}(\ell)$, and is called BM relativity.
Under this BMS, the premium denoted as $M(\lambda_\kappa, \ell)$ is a function of the {\it a priori} rate $\lambda_\kappa$ and BM level $\ell$. Throughout this study, we present various types of premium $M(\lambda_\kappa, \ell)$.

From among the various types of BMSs, we review the classical BMS, where all the policyholders share the same BM relativity table
\begin{equation}\label{eq.10}
\gamma(1)=\gamma_{[{\rm Shared}]}(1), \cdots, \gamma(z)=\gamma_{[{\rm Shared}]}(z),
\end{equation}
and the premium is determined as
\begin{equation*}%\label{eq.5}
  M_{[{\rm Shared}]}(\lambda_\kappa, \ell):=\xi_{[{\rm Shared}]}(\lambda_\kappa) \gamma_{[{\rm Shared}]}(\ell),
\end{equation*}
where $\xi_{[{\rm Shared}]}$ is the {\it a priori} rate function determined from the policyholder's {\it a priori} risk characteristics $\boldsymbol{x}_\kappa$.
This BMS, in which all the policyholders share the relativity table in \eqref{eq.10},
is called the {\it {\it BMS with shared BM relativity table}}. Because of its simplicity, the {\it BMS with shared BM relativity table} is a common experience ratemaking system in auto insurance. We formally describe this BMS as follows.

\begin{BMS}[{\it BMS with shared BM relativity table}]\label{BMS.1}
  Consider the BMS where the policyholder's premium at the BM level $\ell$ and {\it a priori} rate $\Lambda=\lambda_\kappa$ is given as
  \begin{equation}\label{eq.5}
  M_{[{\rm Shared}]}(\lambda_\kappa, \ell):=\xi_{[{\rm Shared}]}(\lambda_\kappa) \gamma_{[{\rm Shared}]}(\ell).
\end{equation}
  This BMS is with shared BM relativities.
\end{BMS}

The canonical choice of the {\it a priori} rate function in the {\it BMS with shared BM relativity table} in the literature \citep{Norberg, Denuit2,Lemaire2,Chong} is
\begin{equation*}
\tilde{\xi}_{[{\rm Shared.P}]}(\boldsymbol{x}_\kappa):=\eta^{-1}\left( \boldsymbol{x}_\kappa\boldsymbol{\beta}\right)=\lambda_\kappa,
\end{equation*}
or equivalently
\begin{equation}\label{eq.5.1}
\tilde{\xi}_{[{\rm Shared.P}]}(\lambda_\kappa):=\lambda_\kappa. \end{equation}
The (optimal) BM relativity is determined from the various optimization settings.
For the various optimal BM relativity versions, we refer to \citet{Norberg, Pitrebois2003, Chong}.
From among the various optimization settings, we prefer the premium minimizing the objective function %criteria
\begin{equation}\label{eq.important}
\E{\left(\Lambda\Theta - M(\Lambda, {L})\right)^2}
\end{equation}
as in \citet{Chong}, since one of the main tasks of the BMS is to estimate the predictive mean of $N_{t+1}$.
We formally present this in Criteria \ref{criteria.1}.
 For the given premium $\tilde{M}(\Lambda, {L})$, the mean square error in \eqref{eq.important} is the {\it hypothetical mean square error} (HMSE), denoted by
\[
{\rm HMSE}\left( \tilde{M}(\Lambda, {L})\right):=\E{\left(\Lambda\Theta - \tilde{M}(\Lambda, {L})\right)^2}.
\]

\begin{criteria}\label{criteria.1}
Consider the mean square error in \eqref{eq.important}.
  Then, from among the various premium choices $M(\lambda_\kappa, \ell)$ in the BMS, we prefer the premium with the smallest mean square error in \eqref{eq.important}.

\end{criteria}
Under Criteria \ref{criteria.1}, \citet{Chong} obtain the optimal BM relativity
\begin{equation}\label{eq.ahn130}
(\tilde{\gamma}_{[{\rm Shared.P}]}(1), \ldots, \tilde{\gamma}_{[{\rm Shared.P}]}(z)):=\argmin_{\gamma} \mathbb{E}[\left(\Lambda\Theta-\tilde{\xi}_{[{\rm Shared.P}]}(\Lambda){\gamma}(L)\right)^2],
\end{equation}
where the right-hand side (RHS) of the non-negative function $\gamma$ is optimized, to obtain the solution
\begin{equation*}%\label{eq.60}
\begin{aligned}
\tilde{\gamma}_{[{\rm Shared.P}]}(\ell)&=\frac{\E{ \Lambda^2 \Theta \big\vert L=\ell}}{\E{\Lambda^2\big\vert L=\ell}}\\ %, \quad \ell=1, \ldots, {z}.
&=\frac{ \sum_{\kappa \in \mathcal{K}} w_{\kappa} \lambda_{\kappa}^2 \int   \theta\, \pi_\ell(\lambda_{\kappa} \theta) g(\theta){\rm d}\theta}
{\sum_{\kappa \in \mathcal{K}} w_{\kappa}\lambda_{\kappa}^2 \int \pi_\ell(\lambda_{\kappa}\theta)g(\theta){\rm d}\theta}.
\end{aligned}
\end{equation*}

Now, we can provide the specific premium method classified in the {\it BMS with shared BM relativity table}. The premiums are commonly utilized in the BMS literature.

\bigskip
\noindent
{\bf Premium from Partial Optimization of Shared BM Relativity Table (PPOS).}
From among the various type of premiums in the {\it BMS with shared BM relativity table}, we consider the premium at the BM level $\ell$ with the {\it a priori} rate $\lambda_\kappa$, given as
\begin{equation*}%\label{eq.n3}
\tilde{M}_{[{\rm Shared.P}]}(\lambda_\kappa, \ell) = \tilde{\xi}_{[{\rm Shared.P}]}(\lambda_\kappa) \tilde{\gamma}_{[{\rm Shared.P}]}(\ell),
\end{equation*}
where the {\it a priori} rate function $\tilde{\xi}_{[{\rm Shared.P}]}$ is determined before optimization, as in \eqref{eq.5.1}, while
the BM relativity function $\tilde{\gamma}_{[{\rm Shared.P}]}$ is determined after optimization, as in \eqref{eq.ahn130}.
This premium is the {\it PPOS},
in that only the BM relativity table is optimized while the {\it a priori} rate function is predetermined.
\bigskip

The PPOS is the same as the premium in \citet{Chong}.
The premiums in the classical BMS \citep{Norberg, Pitrebois2003} can be regarded as the PPOS with some variation in specification of the optimization process. Therefore, without loss of generality, the PPOS can be considered as the premium under the classical BMS.
As a base premium method, we can conveniently utilize the following premium with no {\it a posteriori} risk classification (PNO).

\bigskip
\noindent
{\bf Premium with No a Posteriori Risk Classification (PNO).}
Consider the policyholder's premium at the BM level $\ell$ with the {\it a priori} rate $\lambda_\kappa$ given as
\begin{equation*}%\label{eq.n30}
\begin{aligned}
\tilde{M}_{[{\rm Shared.No}]}(\lambda_\kappa, \ell) &:= \tilde{\xi}_{[{\rm Shared.No}]}(\lambda_\kappa)
\tilde{\gamma}_{[{\rm Shared.No}]}(\ell)\\
&= \lambda_\kappa,
\end{aligned}
\end{equation*}
where the {\it a priori} and {\it a posteriori} rate functions are given as
\[
\tilde{\xi}_{[{\rm Shared.No}]}\left( \lambda_{\kappa}\right)=\lambda_{\kappa}
\quad\hbox{and}\quad
\tilde{\gamma}_{[{\rm Shared.No}]}(\ell)=1,
\]
respectively.
This premium is the {\it PNO}. Note that the PNO can be classified as
the {\it BMS with shared BM relativity table}, where the BM relativity function is given by the constant function of $1$.
\bigskip

From the definition of PPOS, compared to PNO, the PPOS provides a better prediction in that
  \begin{equation}\label{eq.p1}
    \E{\left(\Lambda\Theta- \tilde{M}_{\rm [Shared.No]}(\Lambda,L) \right)^2}
   \le
  \E{\left(\Lambda\Theta- \tilde{M}_{[{\rm Shared.P}]}(\Lambda,L) \right)^2}.
  \end{equation}

\section{Double-Counting Problem in the BMS with Shared BM Relativities}\label{sec.33}

From our investigation, the most common premium type in the classical BMS literature is that from the {\it BMS with shared relativity table}, especially with the {\it a priori} rate function determined as
\[
\xi\left(\lambda_\kappa\right)=\lambda_\kappa.
\]
While such premiums are convenient and easy to understand, they are unfair
in that the policyholders with bad/good {\it a priori} risk characteristics pay more/less than the actual premium.

This unfairness in tariff is due to the double-counting problem, and is studied
in the literature in various contexts \citep{Taylor1997, Lemaire2015, Chong}.
As briefly discussed in \citet{Pitrebois2003}, the double-counting problem can be solved with the {\it BMS with individualized BM relativity table}, where policyholders have their own BM relativity table depending on their {\it a priori} risk characteristics. However, the {\it BMSs with individualized BM relativity table} are not widely accepted in practice, mainly because the insurer does not want to complicate the BMS.

 Motivated from the literature, we provide the formal mathematical definition of
the double-counting problem in this section, and
propose an index that can detect the unfairness of
a given BMS. In other words, we present the double-counting problem index.
With these mathematical tools, we can confirm that
the {\it BMS with individualized BM relativity tables} can fully solve the double-counting problem.

\subsection{Double-Counting Problem}

First, we define the unbiasedness of experience ratemaking in the BMS.
As with the case of $M_{\rm [Bayes]}(\lambda_\kappa, \mathcal{F}_{t})$ in \eqref{eq.6}, the unbiasedness of
premium $M(\Lambda, L)$ in the BMS is defined as follows.

\begin{definition}
Under the BMS with the $-1/+h$ transition rule,
the premium $M(\Lambda, L)$
is {\it unbiased} if it satisfies
\begin{equation}\label{eq.8}
\E{M(\Lambda, L)\big\vert \Lambda=\lambda_\kappa}= \lambda_\kappa.
\end{equation}
Otherwise, the premium $M(\lambda_\kappa, \ell)$ is biased.
\end{definition}

Now, we explain the double-counting problem in the classical BMS, that is, the PPOS. First, recall that the PPOS $\tilde{M}_{\rm [Shared.P]}(\lambda_\kappa, \ell)$% in {\it BMS with shared BM relativity table}
satisfies
\[
\tilde{\xi}_{\rm [Shared.P]}(\lambda_\kappa)=\lambda_\kappa.
\]
Hence, the unbiased condition in \eqref{eq.8} is equivalent to
\begin{equation}\label{eq.n16}
\E{\tilde{\gamma}_{\rm [Shared.P]}(L)\vert \Lambda}=1.
\end{equation}

We can show that \eqref{eq.n16} is not true for the PPOS.
Consider two policyholders $i$ and $j$ with the {\it a priori} rates $\Lambda_i=\lambda_{\rm low}$ and $\Lambda_j=\lambda_{\rm high}$, respectively, where $\lambda_{\rm low}<\lambda_{\rm high}$.
While their residual effects $\Theta_i$ and $\Theta_j$ are unknown, the $j$-th policyholder
tends to have a higher mean frequency $\lambda_{\rm high}\,\Theta_j$ than the mean frequency $\lambda_{\rm low}\,\Theta_i$ of the $i$-th policyholder on average.
Thus, the $j$-th policyholder on average shows higher BM levels than the $i$-th policyholder. Consequently, the policyholder with higher {\it a priori} rate $\lambda_{\rm high}$ tends to have a higher BM relativity on average,
\begin{equation}\label{eq.n5}
\E{\tilde{\gamma}_{[\rm {Shared.P}]}(L_j)\big\vert \Lambda_j=\lambda_{\rm high}}%> \lambda_{\rm high}
>\E{\tilde{\gamma}_{[\rm {Shared.P}]}(L_i)\big\vert \Lambda_i=\lambda_{\rm low}}, %< \lambda_{\rm low}.
\end{equation}
The equivalence between \eqref{eq.8} and \eqref{eq.n16} implies bias in the premium. Moreover,
Especially, \eqref{eq.n5} indicates that the {\it a priori} risk characteristics affect the {\it a posteriori} as well as the {\it a priori} risk classifications. The dual role of the {\it a priori} risk characteristics resulting in biased premiums is the double-counting problem.

In general, the unbiasedness in \eqref{eq.8} can be written as
\begin{equation}\label{e2}
  \E{\frac{M(\Lambda, L)}{\Lambda}\bigg\vert \Lambda=\lambda_{\kappa}}=1,
\end{equation}
where
\begin{equation*}
\frac{M(\Lambda, L)}{\Lambda}\quad\hbox{and}\quad \Lambda,
\end{equation*}
are called {\it pure relativity} and {\it pure priori rate}, respectively. Hence, \eqref{e2} is a condition for the unbiasedness of pure relativity.
Concerning the PPOS and PNO, we obtain the pure relativity as
\[
    \frac{\tilde{M}_{[{\rm Shared.P}]}(\lambda_\kappa, \ell)}{\Lambda}=\tilde{\gamma}_{[{\rm Shared.P}]}\left( \ell \right)\quad\hbox{and}\quad
        \frac{\tilde{M}_{[{\rm Shared.No}]}(\lambda_\kappa, \ell)}{\Lambda}=1.
\]
Now, from this pure relativity, we can formally define the double-counting problem as follows.

\begin{definition}\label{def.2}
The premium $M(\lambda_\kappa, \ell)$ suffers from the double-counting problem if pure relativity has the following order:
\[
\E{\frac{M(\Lambda, L)}{\Lambda}\bigg\vert \Lambda=\lambda_{\kappa_1}}>
\E{\frac{M(\Lambda, L)}{\Lambda}\bigg\vert \Lambda=\lambda_{\kappa_2}},
\quad\hbox{for any}\quad \lambda_{\kappa_1}>\lambda_{\kappa_2}.
\]
\end{definition}

We next provide a numerical example
of the double-counting problem.

\begin{example}\label{ex.1}

For the frequency random effects model, consider the specific distribution
\begin{equation*}%\label{eq.7}
        N_{t}\big\vert \left(\lambda_\kappa, \theta\right) \iid {\rm Pois}\left(\lambda_\kappa\,\theta\right),
\end{equation*}
with
\[
\P{\Lambda=\lambda_{\rm low}}=1/3,\quad  \P{\Lambda=\lambda_{\rm mid}}=1/3, \quad\hbox{and}\quad \P{\Lambda=\lambda_{\rm high}}=1/3,
\]
and
\[
\Theta\sim {\rm Gamma}(1, \psi),
\]
where $\Theta\sim {\rm Gamma}(1, \psi)$ represents a gamma distribution with mean of $1$ and dispersion parameter $\psi$.
Here, further to our real data analysis in Section \ref{sec.5}, we set $\psi=0.8$, $\lambda_{\rm low}=0.1$, $\lambda_{\rm mid}=0.5$, and $\lambda_{\rm high}=0.9$.

Now, we consider a PPOS under the BMS with the $-1/+2$ transition rule and $10$ BM levels.
The distribution of $L$ and the corresponding BM relativity are given in Table \ref{rl.ex3}.
We thus obtain the conditional expectations for pure relativity as follows:

\begin{equation*}%\label{eq.12}
  \begin{cases}
    \E{\frac{\tilde{M}_{[{\rm Shared.P}]}(\Lambda, L)}
    {\Lambda}\big\vert \Lambda=\lambda_{\rm low}  \,\,}=
    \E{\tilde{\gamma}_{[{\rm Shared.P}]}(L)\big\vert \Lambda=\lambda_{\rm low} \,\,}= 0.304;\\
    \E{\frac{\tilde{M}_{[{\rm Shared.P}]}(\Lambda, L)}
    {\Lambda}\big\vert \Lambda=\lambda_{\rm mid}\,}=
    \E{\tilde{\gamma}_{[{\rm Shared.P}]}(L)\big\vert \Lambda=\lambda_{\rm mid}\,}  = 0.805;\\
    \E{\frac{\tilde{M}_{[{\rm Shared.P}]}(\Lambda, L)}
    {\Lambda}\big\vert \Lambda=\lambda_{\rm high}}=
    \E{\tilde{\gamma}_{[{\rm Shared.P}]}(L)\big\vert \Lambda=\lambda_{\rm high}}= 1.069,\\
  \end{cases}
\end{equation*}
which in turn indicates the double-counting problem
\[
\E{\tilde{\gamma}_{[{\rm Shared.P}]}(L)\big\vert \Lambda=\lambda_{\rm low}} <
\E{\tilde{\gamma}_{[{\rm Shared.P}]}(L)\big\vert \Lambda=\lambda_{\rm mid}} <
\E{\tilde{\gamma}_{[{\rm Shared.P}]}(L)\big\vert \Lambda=\lambda_{\rm high}}.
\]
Furthermore, this bias in BM relativities leads to that in the premium:
\[
  \begin{cases}
    \E{\tilde{M}_{[{\rm Shared.P}]}(\Lambda, L)\big\vert \Lambda=\lambda_{\rm low}\,\,} =0.030; \\
    \E{\tilde{M}_{[{\rm Shared.P}]}(\Lambda, L)\big\vert \Lambda=\lambda_{\rm mid}\,}  =0.403; \\
    \E{\tilde{M}_{[{\rm Shared.P}]}(\Lambda, L)\big\vert \Lambda=\lambda_{\rm high}} 	  =0.962. \\
  \end{cases}
\]

\end{example}

\subsection{Measure for Double-Counting Problem}\label{sec.3.2}

Recall that an unbiased premium can be characterized by unbiasedness in pure relativity, as in \eqref{e2}. However, if the premium suffers from double-counting problem, the conditional expectation of pure premium
\begin{equation}\label{e3}
  \E{\frac{M(\Lambda, L)}{\Lambda}\bigg\vert \Lambda=\lambda_{\kappa}}
\end{equation}
would be an (increasing) function of $\lambda_{\kappa}$ by definition.
If the other conditions are identical, we would prefer the premium with smaller bias in \eqref{e3}, which is equivalent to having the smaller variance of pure relativity conditional expectation
\begin{equation}\label{eq.16}
\Var{ \E{\frac{M(\Lambda, L)}{\Lambda} \bigg\vert \Lambda}}.
\end{equation}
Because $\Var{\frac{M(\Lambda, L)}{\Lambda}}$ is decomposed into the summation of two components
\begin{equation*}%\label{eq.23}
\Var{\frac{M(\Lambda, L)}{\Lambda}}=
\Var{\E{\frac{M(\Lambda, L)}{\Lambda}\bigg\vert \Lambda}} + \E{\Var{\frac{M(\Lambda, L)}{\Lambda}\bigg\vert \Lambda}},\\
\end{equation*}
we propose the normalized version of \eqref{eq.16}
 \begin{equation}\label{eq.30}
{\rm FIX}\left(M\right):=\frac{\Var{\E{\frac{M(\Lambda, L)}{\Lambda}\bigg\vert \Lambda}}}{\Var{\frac{M(\Lambda, L)}{\Lambda}}}
 \end{equation}
 as the {\it fairness index} (FIX) of the premium $M(\lambda_\kappa, \ell)$.
 For the definition of \eqref{eq.30}, we consider only the FIX of the nontrivial premium $M(\lambda_\kappa, \ell)$ satisfying
 \[
\Var{\frac{M(\Lambda, L)}{\Lambda}}> 0.
 \]
  For premium $M(\lambda_\kappa, \ell)$, the FIX satisfies the inequality
 \[
 0\le {\rm FIX}(M) \le 1.
 \]
Note that no double-counting problem implies that ${\rm FIX}(M)=0$.
In contrast,
${\rm FIX}(M)=1$ implies
the presence of a perfect double-counting problem, meaning that
the {\it a posteriori} risk classification of the premium is determined purely by the {\it a priori} rate $\lambda_\kappa$.
Given that the other factors are equal, the BMS with a small value of ${\rm FIX}$ would be preferred.

As for the PPOS $\tilde{M}_{\rm [Shared.P]}\left(\lambda_\kappa, \ell \right)$, the FIX defined in \eqref{eq.16} can be simplified as follows, for a more intuitive interpretation:
 \begin{equation}\label{eq.17}
{\rm FIX}\left(\tilde{M}_{\rm [Shared.P]}\right)=\frac{\Var{\E{\tilde{\gamma}_{[{\rm Shared.P}]}(L)\big\vert \Lambda}}}{\Var{\tilde{\gamma}_{[{\rm Shared.P}]}(L)}}.
 \end{equation}
 The FIX in \eqref{eq.17} can be compared to the fairness measure defined in
  \citet{Chong}. For a specific comparison, see Remark \ref{rem.1} in the appendix.
 The FIX calculation procedure is given in Lemma \ref{lem.a3} in the appendix.

\begin{example}[Continued from Example \ref{ex.1}]\label{ex.2}

As regards the PPOS, we are mainly interested in the degree of double counting under various {\it a priori} rate distributions, and in how such double counting affects the quality of prediction under BMS.

In addition to the settings in Example \ref{ex.1},
we once again numerically study the various {\it a priori} rate distributions as summarized in Table \ref{tab.lamb}.
Compared to our base Scenario I, where the {\it a priori} rate distributions are the same as in Example \ref{ex.1}, Scenario II shows a smaller {\it a priori} rate variance with the same mean, whereas
Scenario III shows a larger {\it a priori} rate mean with the same variance.
 The Scenario IV {\it a priori} rates are left-skewed.
 All scenarios have distinct {\it coefficient of variation} (CV) values.

\begin{table}[h!]
\caption{The selected {\it a priori} rate sets}\label{tab.lamb}
\centering
\begin{tabular}{l c c c c c c c}
 \hline
 & $\lambda_{\rm low}$ & $\lambda_{\rm mid}$ & $\lambda_{\rm high}$ && mean & variance & CV\\ \hline
 Scenario I 		& 0.1 & 0.5 & 0.9 && 0.5 & 0.16 &0.80\\
 Scenario II 		& 0.4 & 0.5 & 0.6 && 0.5 & 0.01 &0.20\\
 Scenario III 	& 0.6 & 1.0 & 1.4 && 1.0 & 0.16 &0.40\\
 Scenario IV 		& 0.1 & 0.2 & 1.2 && 0.5 & 0.37 &1.22\\
 \hline
\end{tabular}
%}
\end{table}

The degree of double counting and quality of prediction are based on the FIX and HMSE
for each scenario, as in Table \ref{tab.ex.2}.
The distribution of $L$ is presented for a reference.
As in \citet{Taylor1997}, we find the set of the {\it a priori} rate with larger CV increasing the double-counting problem.

\begin{table}
\caption{FIX and HMSE under the -1/+2 system} \label{tab.ex.2}
\centering
\resizebox{\linewidth}{!}{%%% resize table
\begin{tabular}{ l  c c c c c c c c c c c||c c | c c}
 \hline
 	& \multicolumn{10}{c}{$\P{L=\ell}$} & &   \multicolumn{2}{c|}{$\tilde{M}_{\rm [Shared.P]}$} & \multicolumn{2}{c}{$\tilde{M}_{\rm [Shared.No]}$}\\ \cline{2-11}\cline{13-16}
	\multicolumn{1}{r}{$\ell$} & 1& 2& 3& 4& 5& 6& 7& 8& 9& 10 &&  FIX & HMSE &  FIX & HMSE\\
\hline
Scenario I 	&	0.414 	&	0.048 	&	0.059 	&	0.030 	&	0.032 	&	0.029 	&	0.036 	&	0.049 	
			&	0.087 	&	0.217 	&&	0.308 	& 0.163 & 0 & 0.285 \\
Scenario II	&	0.303 	&	0.050 	&	0.064 	&	0.039 	&	0.043 	&	0.041 	&	0.051 	&	0.068 	
			&	0.112 	&	0.230 	&&	0.018 	& 0.099 & 0 & 0.205\\
Scenario III	&	0.170 	&	0.031 	&	0.040 	&	0.026 	&	0.030 	&	0.032 	&	0.043 	&	0.066 	
			&	0.133 	&	0.427 	&&	0.063 	& 0.560 & 0 & 0.885\\
Scenario IV	&	0.492 	&	0.053 	&	0.063 	&	0.029 	&	0.029 	&	0.024 	&	0.027 	&	0.035 	
			&	0.062 	&	0.187 	&&	0.469 	& 0.256 & 0 & 0.397\\
\hline
\end{tabular}
}
\end{table}

\end{example}

Unlike the PPOS, the PNO under any BMS is unbiased and free of the double-counting problem since the equation shows that
\[
\begin{aligned}
\E{\tilde{M}_{\rm [Shared.No]}\left(\Lambda_i, \mathcal{F}_{i,t}\right)\big\vert \Lambda_i=\lambda_\kappa}
&=\lambda_\kappa;
\end{aligned}
\]
this can be confirmed with the FIX of zero.
We know that the introduction of the {\it a posteriori} risk classification can improve the PNO's prediction ability; for an example, see the PPOS in \eqref{eq.p1}. However,
the introduction of the {\it a posteriori} risk classification can lead to the double-counting problem in PPOS. In summary, the PPOS outperforms the PNO in terms of prediction power at the cost of double counting. The last two columns in Table \ref{tab.ex.2} provide the FIX and HMSE for the PNO under the same settings in Example \ref{ex.2}. A comparison of the FIX and HMSE in the PPOS and PNO in Table \ref{tab.ex.2} confirms our findings.

\begin{remark}
  We defines the FIX in \eqref{eq.30} for premium $M$ in the BMS, but the FIX does not have to be defined under the BMS. For example, for the Bayesian premium $M_{\rm [Bayes]}$ in \eqref{eq.6}, the FIX can be defined as
  \[
  {\rm FIX}(M_{\rm [Bayes]}):=\frac{\Var{\E{M_{\rm [Bayes]}\big\vert \Lambda}}}{\Var{M_{\rm [Bayes]}}}.
  \]
\end{remark}

In Section \ref{sec.3}, we argue that the Bayesian premium $M_{\rm [Bayes]}(\boldsymbol{X}, \mathcal{F}_t)$ in \eqref{eq.6} and the Bayesian estimator of the residual effect $\E{\Theta\big\vert \boldsymbol{X}, \mathcal{F}_t}$ are unbiased owing to \eqref{eq.41} and \eqref{eq.42}. This implies that the experience ratemaking based on Bayesian estimation is free of the double-counting problem.
The following corollary confirms this with the FIX.

\begin{corollary}\label{cor.2}
  The Bayesian premium $M_{\rm [Bayes]}(\Lambda, \mathcal{F}_t)$ satisfies
  \[
  {\rm FIX}(M_{\rm [Bayes]})=0%\quad\hbox{and}\quad {\rm FIX}(\tilde{M}_{\rm [Shared.P]})=0.
  \]

\end{corollary}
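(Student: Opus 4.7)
The plan is to show that the numerator of $\text{FIX}(M_{\rm [Bayes]})$ vanishes, since the corresponding pure relativity has a conditional mean that is almost surely constant when conditioned on $\Lambda$.

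First I would rewrite the Bayesian pure relativity in closed form. Using the definition of the Bayesian premium in \eqref{eq.6},
\[
\frac{M_{\rm [Bayes]}(\Lambda, \mathcal{F}_t)}{\Lambda} = \E{\Theta \,\big\vert\, \Lambda, \mathcal{F}_t},
\]
so the object driving FIX is exactly the Bayesian estimator of the residual effect.

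Next I would apply the tower property to collapse $\mathcal{F}_t$ out of the inner conditioning,
\[
\E{\frac{M_{\rm [Bayes]}(\Lambda, \mathcal{F}_t)}{\Lambda}\,\bigg\vert\, \Lambda}
=\E{\E{\Theta\,\big\vert\,\Lambda, \mathcal{F}_t}\,\big\vert\,\Lambda}
=\E{\Theta\,\big\vert\,\Lambda}.
\]
Then I would invoke the two structural assumptions from the Frequency Random Effect Model: (i) $\Theta$ is independent of $\Lambda$, and (ii) $\E{\Theta}=1$ by the identification constraint \eqref{eq.3}. Together these give $\E{\Theta\,\vert\,\Lambda}=\E{\Theta}=1$ almost surely; equivalently, this is the unbiasedness of the Bayesian estimator already observed in \eqref{eq.42}.

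Finally, because $\E{M_{\rm [Bayes]}/\Lambda\,\vert\,\Lambda}$ is almost surely the constant $1$, its variance is zero, so the numerator of the FIX ratio vanishes and ${\rm FIX}(M_{\rm [Bayes]})=0$ (with the usual nontriviality assumption that the denominator, $\Var{\E{\Theta\,\vert\,\Lambda,\mathcal{F}_t}}$, is positive). There is no substantive obstacle here: the whole argument is a one-line application of the tower property together with the independence of $\Theta$ and $\Lambda$, which is precisely what distinguishes the Bayesian premium — built from the prior on $\Theta$ — from the BMS-constrained premium $\tilde{M}_{\rm [Shared.P]}$, where the relativity function is forced to depend only on $L$ and therefore cannot fully decouple from $\Lambda$.
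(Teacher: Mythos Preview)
Your proof is correct and follows essentially the same route as the paper: rewrite the Bayesian pure relativity as $\E{\Theta\mid\Lambda,\mathcal{F}_t}$, apply the tower property to reduce to $\E{\Theta\mid\Lambda}$, and then use the independence of $\Theta$ and $\Lambda$ together with $\E{\Theta}=1$ to conclude that the conditional mean is constant, forcing the FIX numerator to vanish. Your added remark about the nontriviality of the denominator is a welcome clarification the paper leaves implicit.
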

\begin{proof}
We prove this corollary as follows:
  \[
  \begin{aligned}
  \E{\frac{M_{\rm [Bayes]}(\Lambda, \mathcal{F}_t)}{\Lambda}\bigg\vert \Lambda}
  &=\E{\E{\Theta\big\vert \Lambda, \mathcal{F}_t}\big\vert \Lambda}\\
  &=\E{\Theta\big\vert \Lambda}\\
  &=1,
  \end{aligned}
  \]
  where the last equality is from the independence assumption and \eqref{eq.3}.
\end{proof}

\begin{remark}\label{rem.3} So far, we considered the premiums in the classical BMS literature as having nonzero FIX, and the Bayesian premium as having zero FIX. While both the premiums have the same {\it a priori} rate, the former suffers from double counting, but the latter does not, because of the {\it a priori} risk characteristics in their {\it a posteriori} risk ratemaking function. To be more specific, the premiums in the classical BMS literature are directly exposed to double counting because they use the same BM relativity table for all the policyholders; that is, the policyholders' BM levels are determined by their claim histories, which are inevitably influenced by their {\it a priori} risk characteristics. Therefore, a zero FIX is attainable only when the {\it a posteriori} risk classification does not depend solely on the policyholders' claim history. That is, the {\it a posteriori} risk classification should also use the policyholders' individual risk characteristics to remove its own influence due to their claim history, which is the key to zero FIX of the Bayesian premium. In other words, the individualized {\it a posteriori} risk classification is critical to prevent the double-counting problem. In this regard, \citet{Pitrebois2003} proposed the use of several BM relativity tables to reduce the double-counting problem.
\end{remark}
In the following subsection, we investigate the double-counting problem of the BMS with several BM relativity tables based on the {\it a priori} risk characteristics.

\subsection{Double-Counting Problem and the BMS with Individualized BM Relativity Table} \label{sec.3.3}

Using numerical examples, \citet{Pitrebois2003} show that the introduction of several BM relativity tables
based on a few {\it a priori} risk characteristics can reduce the double-counting problem.
In this subsection, we formally present the BMS in which all policyholders have their own BM relativity tables according to their {\it a priori} risk characteristics, and show that such a BMS does not suffer from the double-counting problem at all. First, we start with the specification of such a BMS.

 \begin{BMS}[BMS with Individualized BM Relativity Table]\label{BMS.2}
  Consider the BMS with the following properties:

\begin{itemize}
  \item {\bf A Priori Rate}: A policyholder with the {\it a priori} risk characteristics $\boldsymbol{x}_{\kappa}$ given the {\it a priori} rate
  \[
\xi_{\rm[Ind]}(\lambda_{\kappa})
  \]
  at the time of contract.
  \item {\bf Relativity Table}: All policyholders have their own individualized BM relativity tables depending on their {\it a priori} risk characteristics
  \begin{equation}\label{eq.rel.1}
\gamma_{\rm [Ind]}(\lambda_{\kappa},1),\gamma_{\rm [Ind]}(\lambda_{\kappa},2), \cdots, \gamma_{\rm [Ind]}( \lambda_{\kappa},z).%,\quad {\kappa}=1%, \cdots, \mathcal{K}.
\end{equation}

 \item {\bf Premium}: The premium of a policyholder at the BM level $\ell$ with the {\it a priori} risk characteristics $\boldsymbol{x}_{\kappa}$ obtained by multiplying the {\it a priori} rate with the relativity
\begin{equation*}%\label{eq.17}
\begin{aligned}
M_{\rm [Ind]}(\lambda_{\kappa}, \ell)&=\xi_{\rm [Ind]}(\lambda_{\kappa})\gamma_{\rm [Ind]}(\lambda_{\kappa}, \ell).\\
\end{aligned}
\end{equation*}
\end{itemize}

\end{BMS}

Now, under the {\it BMS with individualized BM relativity table}, we explain how to obtain the individualized BM relativities in \eqref{eq.rel.1} for all the risk characteristics $\boldsymbol{x}_\kappa$. The aim is again to minimize the mean square error in Criteria \ref{criteria.1}.
As in the previous section, for the sake of brevity, we choose the {\it a priori} rate function
\[
\tilde{\xi}_{\rm[Ind]}(\boldsymbol{x}_{\kappa}):=\eta^{-1}(\boldsymbol{x}_{\kappa}\boldsymbol{\beta})
\quad\hbox{or equivalently}\quad
\tilde{\xi}_{\rm[Ind]}(\lambda_{\kappa}):=%\eta^{-1}(\boldsymbol{x}_{\kappa}\boldsymbol{\beta})=
\lambda_{\kappa}.
\]
Then, for each risk group $\kappa=1, \cdots, \mathcal{K}$, we choose the BM relativity in \eqref{eq.rel.1} as the solution to the optimization problem
\begin{equation}\label{eq.ahn13}
\tilde{\gamma}_{\rm [Ind]}(\lambda_{\kappa}, \ell) :=\argmin_{\gamma_{\rm [Ind]}} \mathbb{E}[\left(\Lambda\Theta-\Lambda\gamma_{\rm [Ind]}(\Lambda, L)\right)^2],
\end{equation}
where the RHS of \eqref{eq.ahn13} is optimized with the non-negative functions.
For convenience, we also define
\begin{equation}\label{eqq.1}
\tilde{M}_{\rm [Ind]}(\lambda_{\kappa}, \ell):=\tilde{\xi}_{\rm [Ind]}(\lambda_{\kappa}) \tilde{\gamma}_{\rm [Ind]}(\lambda_{\kappa}, \ell).
\end{equation}

\bigskip
\noindent
{{\bf Premium from Optimizing the Individualized BM Relativity Table (POI).}
From among the various types of premiums in the {\it BMS with individualized BM relativity table}, consider the policyholder's premium at the BM level $\ell$ with the {\it a priori} rate $\lambda_\kappa$ given in \eqref{eqq.1}. This premium will be called the {\it premium from optimization of individualized BM relativity table} (POI).
\bigskip
}

\begin{theorem}\label{eff.thm.1}
Equation \eqref{eq.ahn13} is solved as
\begin{equation}\label{eff.eq.7}
       \tilde{\gamma}_{\rm [Ind]}\left(\lambda_{\kappa}, \ell\right)=\E{\Theta\big\vert \Lambda=\lambda_{\kappa}, {L}=\ell},
\end{equation}
  with
\begin{equation*}%\label{eff.eq.6}
    \begin{aligned}
      \E{\Theta\big\vert \Lambda=\lambda_{\kappa}, L=\ell}=\frac{\int_0^{\infty} \theta\pi_{\ell}(\lambda_{\kappa} \theta,  \psi ) g(\theta) {\rm d} \theta } {\int_0^{\infty} \pi_{\ell}(\lambda_{\kappa} \theta,  \psi ) g(\theta) {\rm d} \theta }.
    \end{aligned}
\end{equation*}
\end{theorem}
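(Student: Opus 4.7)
The plan is to reduce the optimization in \eqref{eq.ahn13} to a pointwise problem using the tower property of conditional expectation, then apply the classical fact that the conditional mean minimizes mean square error, and finally rewrite the resulting conditional expectation in closed form via Bayes' rule using the stationary distribution $\pi_\ell$.

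First I would observe that the objective in \eqref{eq.ahn13} involves a function $\gamma_{\rm [Ind]}$ that is allowed to depend on \emph{both} $\Lambda$ and $L$. So by the tower property,
\[
\E{\bigl(\Lambda\Theta - \Lambda\gamma_{\rm [Ind]}(\Lambda, L)\bigr)^2}
= \E{\Lambda^2 \cdot \E{\bigl(\Theta - \gamma_{\rm [Ind]}(\Lambda, L)\bigr)^2 \big\vert \Lambda, L}}.
\]
Since $\Lambda^2 \ge 0$ is $(\Lambda, L)$-measurable, minimizing the outer expectation is equivalent to minimizing the inner conditional mean square error pointwise in $(\lambda_\kappa, \ell)$. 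For each fixed pair $(\lambda_\kappa, \ell)$, the standard $L^2$-projection argument gives that the minimizer $\gamma$ of $\E{(\Theta - \gamma)^2 \mid \Lambda = \lambda_\kappa, L = \ell}$ over constants $\gamma \in \Real$ is the conditional mean $\E{\Theta \mid \Lambda = \lambda_\kappa, L = \ell}$, which is non-negative since $\Theta \ge 0$, so it lies in the feasible set. This yields \eqref{eff.eq.7}.

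Next I would derive the closed-form expression. By Bayes' rule and the independence of $\Lambda$ and $\Theta$ (assumption of the Frequency Random Effect Model), the conditional density of $\Theta$ given $\Lambda = \lambda_\kappa$ and $L = \ell$ is proportional to the prior density $g(\theta)$ times the likelihood of landing in level $\ell$ in the stationary state given frequency $\lambda_\kappa \theta$, which is precisely $\pi_\ell(\lambda_\kappa \theta)$. Consequently,
\[
\E{\Theta \big\vert \Lambda = \lambda_\kappa, L = \ell}
= \frac{\int_0^\infty \theta\, \pi_\ell(\lambda_\kappa \theta)\, g(\theta)\, {\rm d}\theta}
       {\int_0^\infty \pi_\ell(\lambda_\kappa \theta)\, g(\theta)\, {\rm d}\theta},
\]
which matches the expression stated in the theorem.

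I do not foresee a genuine obstacle: the argument is essentially the observation that the optimal predictor in mean square under a conditioning $\sigma$-algebra is the corresponding conditional expectation, combined with a direct Bayes computation. The only subtle point worth emphasizing is that unlike the shared-relativity case (where optimization is constrained to functions of $L$ alone, producing the weighted expression $\E{\Lambda^2 \Theta \mid L}/\E{\Lambda^2 \mid L}$ in \eqref{eq.ahn130}), here the allowance for $\gamma_{\rm [Ind]}$ to depend on $\Lambda$ removes the $\Lambda^2$-weighting and returns the pure Bayesian estimator of $\Theta$. This is precisely what one expects, and — anticipating the remark connecting Theorem \ref{eff.thm.1} to Corollary \ref{cor.2} — explains why the POI inherits the zero-FIX property of the Bayesian premium.
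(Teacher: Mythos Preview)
Your proposal is correct and follows essentially the same route as the paper: the paper decomposes the objective over $(\lambda_\kappa,\ell)$ and takes a derivative to obtain \eqref{eff.eq.7}, which is exactly your pointwise $L^2$-projection argument after factoring out $\Lambda^2$, and then computes the integral formula by the same Bayes-rule manipulation of the conditional density of $\Theta$ given $(\Lambda,L)$. Your explicit remark that the minimizer is non-negative (hence feasible) is a nice touch the paper leaves implicit.
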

\begin{proof}
First, we obtain \eqref{eff.eq.7} from the derivative of
\[
\mathbb{E}[\left(\Lambda\Theta-\Lambda\gamma_{\rm [Ind]}(\Lambda, L)\right)^2]
=\sum\limits_{\ell}\sum\limits_{\kappa}
\mathbb{E}[\left(\Lambda\Theta-\Lambda\gamma_{\rm [Ind]}(\Lambda, L)\right)^2\big\vert
L=\ell, \boldsymbol{X}=\boldsymbol{x}_\kappa]\P{\Lambda=\lambda_\kappa, L=\ell}
\]
with respect to ${\gamma}_{\rm [Ind]}\left(\lambda_{\kappa}, \ell\right)$.
Furthermore, the RHS of \eqref{eff.eq.7} can be calculated from the expression
\[
\begin{aligned}
\E{\Theta\big\vert \Lambda=\lambda_{\kappa}, L=\ell}&=\int_0^{\infty} \theta \P{{L}=\ell\big\vert \Lambda=\lambda_{\kappa}, \Theta=\theta} g(\theta) {\rm d}\theta\frac{\P{\Lambda=\lambda_{\kappa}}}{\P{ \Lambda=\lambda_{\kappa}, L=\ell}}\\
&=\frac{\int_0^{\infty} \theta  \P{{L}=\ell\big\vert \Lambda=\lambda_{\kappa}, \Theta=\theta} g(\theta) {\rm d}\theta}{\P{{L}=\ell\big\vert \Lambda=\lambda_{\kappa}}}\\
&=\frac{\int_0^{\infty} \theta \pi_j(\lambda_{\kappa}\theta,  \psi ) g(\theta) {\rm d}\theta}{\int_0^{\infty}  \pi_j(\lambda_{\kappa}\theta,  \psi ) g(\theta) {\rm d}\theta},
\end{aligned}
\]
where the first equation is just a rearrangement of the conditional distribution, and the last equation is obtained from the definition of the stationary distribution of ${L}$.

\end{proof}

Recall that the double-counting problem arises from the shared BM relativity table, which does not consider the {\it a priori} risk characteristics as a
double-counting problem.
In fact, the observation
 \[
 \begin{aligned}
   \E{\tilde{M}_{\rm [Ind]}(\Lambda, L)\vert \Lambda}&=\Lambda\E{\tilde{\gamma}_{\rm [Ind]}(\Lambda, L)\big\vert \Lambda}\\
   &=\Lambda
 \end{aligned}
 \]
 indicates removal of the bias.
 Here, the second inequality is from the unbiasedness of relativity
  \[
 \begin{aligned}
   \E{\tilde{\gamma}_{\rm [Ind]}(\Lambda, L)\vert \Lambda}&=\E{\E{\Theta\big\vert \Lambda, L}\big\vert \Lambda}\\
   &=\E{\Theta\big\vert \Lambda}\\
   &=\E{\Theta}\\
   &=1,
 \end{aligned}
 \]
 where the third equality is from the assumption of independence between $\Lambda$ and $\Theta$.
 Furthermore, the following calculation of FIX confirms our claim:
 \[
 \begin{aligned}
 {\rm FIX}[\tilde{M}_{\rm [Ind]}]&=\frac{\Var{\E{\tilde{\gamma}_{\rm [Ind]}\left(\Lambda, {L}\right)  \big\vert \Lambda}}}{\Var{ \tilde{\gamma}_{\rm [Ind]}\left(\Lambda, {L}\right) }}\\
&=\frac{\Var{\E{\E{\Theta\big\vert \Lambda, {L}}}\big\vert \Lambda}}{\Var{   \E{\Theta\big\vert \Lambda, {L}} }}\\
&=\frac{\Var{\E{\Theta\big\vert \Lambda}}}{\Var{   \E{\Theta\big\vert \Lambda, {L}} }}\\
&=0,
\end{aligned}
 \]
 where the last equality also comes from the independence between $\Lambda$ and $\Theta$.

In this subsection, we show that the double counting in the BMS can be removed with the individualized BM relativity table. However, the individual system is too complex to be presented to consumers.
Therefore, we discuss how to address the double-counting problem
of the {\it BMS using the shared BM relativity table} in the traditional BMS format.

\section{Full Optimization and Double Counting}\label{sec.4}
In terms of Criteria \ref{criteria.1},
the premiums of the {\it BMS with shared BM relativity table} allow for utilizing the {\it a priori} rate function ${\xi}_{[{\rm Shared}]}$ as well as the BM relativity function ${\gamma}_{[{\rm Shared}]}$ in the optimization process.
For convenience, we predetermine the {\it a priori} rate function in PPOS as
$$\xi_{[{\rm Shared}]}(\lambda_{\kappa})=\eta^{-1}\left( \boldsymbol{x}_{k}\boldsymbol{\beta} \right)$$
before the optimization of $\gamma_{[{\rm Shared}]}$ in \eqref{eq.ahn130}.
However, in terms of optimization, the {\it a priori} rate function $\xi_{[{\rm Shared}]}$ need not be predetermined for the optimization process, and can achieve a better optimization result by allowing for the {\it a priori} rate function to participate in the optimization process.
 In this section, we show how to utilize the {\it a priori} rate function $\xi_{[{\rm Shared}]}$ as well as the BM relativity function $\gamma_{[{\rm Shared}]}$ for optimization with the {\it BMS with shared BM relativity table}, and study how such inclusion affects the double-counting problem.

\subsection{Premium from Full Optimization of Shared BM Relativity Table}\label{sec.4.1}
 This subsection considers the premium of the {\it BMS with shared BM relativity table}, allowing for the {\it a priori} rate function to participate in the optimization process.
 From Criteria \ref{criteria.1}, the objective function in \eqref{eq.important} can be
further minimized by allowing for the {\it a priori} rate function $\xi_{[{\rm Shared}]}$ to contribute to the optimization process as well as the BM relativity function ${\gamma}_{[{\rm Shared}]}$. Specifically, we try to solve the optimization problem

\begin{equation}\label{eq.a.7}
\min_{ {\xi}, {\gamma} } \E{\left(\Lambda\Theta- {\xi}(\Lambda){\gamma}({L}) \right)^2},
\end{equation}
where $\xi$ and $\gamma$ are optimized from among the non-negative functions.
Clearly, the solution of \eqref{eq.a.7} is not unique. For example, if $(\tilde{\xi}, \tilde{\gamma})$ is the solution to \eqref{eq.a.7}, $\left(\frac{1}{c}\tilde{\xi}, c\,\tilde{\gamma} \right)$ will also be the solution to \eqref{eq.a.7} for any $c>0$.
To address this trivial difficulty, we provide the constrained version of \eqref{eq.a.7} as follows:

\begin{equation}\label{eq.a.70}
\min_{ {\xi_q}, {\gamma_q} } \E{\left(\Lambda\Theta- {\xi_q}(\Lambda){\gamma_q}({L}) \right)^2},
 \quad\hbox{subject to}\quad {\gamma_q} \left( \left\lfloor \frac{z}{2} \right\rfloor \right)=q
\end{equation}
for some predetermined constant $q>0$, where $\left\lfloor\cdot\right\rfloor$ is a floor function.
For the pair $(\tilde{\xi}_{[{\rm Shared.F}]}, \tilde{\gamma}_{[{\rm Shared.F}]})$ of the solution of \eqref{eq.a.70},
we define the premium as
\begin{equation}\label{eq.71}
\tilde{M}_{[{\rm Shared.F}]}(\lambda_\kappa, \ell):= \tilde{\xi}_{[{\rm Shared.F}]}(\lambda_\kappa)  \tilde{\gamma}_{[{\rm Shared.F}]}(\ell).
\end{equation}
 Note that the constraint in \eqref{eq.a.70} is actually not a constraint in terms of optimization efficiency, and one can easily show that the two optimization problems of \eqref{eq.a.7} and \eqref{eq.a.70} result in the same efficiency,
\begin{equation}\label{eq.a3}
\min_{ {\xi}, {\gamma} } \E{\left(\Lambda\Theta- {\xi}(\Lambda){\gamma}({L}) \right)^2}
=\min_{ {\xi_q}, {\gamma_q} } \E{\left(\Lambda\Theta- {\xi_q}(\Lambda){\gamma_q}({L}) \right)^2},
\end{equation}
for any $q>0$ with ${\gamma_q} \left( \left\lfloor \frac{z}{2} \right\rfloor \right)=q$.
Next, we formally define the premium obtained from full optimization of the shared BM relativity table.

\bigskip
\noindent
{{\bf Premium obtained from full optimization of shared BM relativity table (PFOS).}
From among the various types of premiums of the {\it BMS with shared BM relativity table}, we consider the policyholder's premium at the BM level $\ell$ with the {\it a priori} rate $\lambda_\kappa$ as given in \eqref{eq.71}.
This is the {\it PFOS}, meaning that both the BM relativity and {\it a priori} rate function are determined by optimizing the {\it BMS with shared relativity table}.
\bigskip
}

Clearly, the PFOS is still a premium obtained from the {\it BMS with shared BM relativity table} since its risk classification includes the following:%is consist of
\begin{itemize}
  \item The a priori classification given by $\xi_{[{\rm Shared}]}(\lambda_\kappa) = \tilde{\xi}_{\rm [Shared.F]}(\lambda_\kappa)$, determined by the {\it a priori} risk characteristics.
  \item The {\it a posteriori} classification function given by $\gamma_{[{\rm Shared}]}(\ell) = \tilde{\gamma}_{\rm [Shared.F]}(\ell)$, determined at the BM level $\ell$.
\end{itemize}
While \eqref{eq.a.70} cannot be analytically solved in general, it can be calculated with Algorithm \ref{algo.a.1} given in the appendix, where the coordinate descent algorithm is presented. A detailed discussion of the algorithm is presented in the appendix.

In fact, the PFOS can also be interpreted as the premium in the
BMS with individualized BM relativities:
\begin{equation}\label{eq.bms10}
\begin{aligned}
\tilde{M}_{\rm [Shared.F]}(\ell, \lambda_{\kappa})&=\tilde{\xi}_{\rm [Shared.F]}(\lambda_{\kappa})\tilde{\gamma}_{\rm [Shared.F]}(\ell)\\
&= \lambda_{\kappa}\left( \frac{\tilde{\xi}_{\rm [Shared.F]}(\lambda_{\kappa})\tilde{\gamma}_{\rm [Shared.F]}(\ell)}{\lambda_{\kappa}}\right),\\
\end{aligned}
\end{equation}
where the {\it a priori} rate function is given as
$\xi_{\rm [Ind]}(\lambda_{\kappa})=\lambda_\kappa$
and the BM relativity table for the policyholder with the {\it a priori} rate $\lambda_{\kappa}$ is given by
\begin{equation*}%\label{eq.a1}
\left({\gamma}_{\rm [Ind]}(1, \lambda_\kappa), \cdots, {\gamma}_{\rm [Ind]}(z, \lambda_\kappa) \right) = \left(\frac{\tilde{\xi}_{\rm [Shared.F]}(\lambda_{\kappa})}{\lambda_{\kappa}}\tilde{\gamma}_{\rm [Shared.F]}(1), \cdots, \frac{\tilde{\xi}_{\rm [Shared.F]}(\lambda_{\kappa})}{\lambda_{\kappa}}\tilde{\gamma}_{\rm [Shared.F]}(z) \right).
\end{equation*}
Note that this individualized BM relativity can be interpreted as pure relativity,
for which we can conveniently provide the following notation:
\begin{equation}\label{eq.b5}
\tilde{\gamma}_{\rm [Shared.F]}^*(\lambda_\kappa, \ell):=\frac{\tilde{\xi}_{\rm [Shared.F]}(\lambda_{\kappa})}{\lambda_{\kappa}}\tilde{\gamma}_{\rm [Shared.F]}(\ell)
\quad\hbox{and}\quad
\tilde{\xi}_{\rm [Shared.F]}^*(\lambda_\kappa):=\lambda_\kappa.
\end{equation}
The optimization of \eqref{eq.a.70} resembles the optimization of the POI since it can be
equivalently written as
\begin{equation}\label{eq.aa}
\min_{ {\xi}_{[{\rm Shared}]}, {\gamma}_{[{\rm Shared}]} } \E{\left(\Lambda\Theta- {\xi}_{[{\rm Shared}]}(\Lambda){\gamma}_{[{\rm Shared}]}({L}) \right)^2}
=\min_{\gamma^* } \E{\left(\Lambda\Theta- \Lambda \gamma_{\rm [Shared.F]}^*(\Lambda, L)\right)^2},
\end{equation}
where the RHS of the equality is optimized among the functions with form $\gamma_{\rm [Shared.F]}^*$, which can be written as
\begin{equation}\label{eq.aa.1}
\gamma_{\rm [Shared.F]}^*(\lambda_\kappa, \ell)=f_1(\lambda_\kappa)f_2(\ell)
\end{equation}
for some positive functions $f_1$ and $f_2$ satisfying

\begin{equation}\label{eq.res.1}
f_2\left(\left\lfloor \frac{z}{2}\right\rfloor\right) = q,
\end{equation}
for some predetermined constant $q>0$.
Thus, the optimization of \eqref{eq.aa} is similar to that of \eqref{eq.a.70}, where the BM relativity function has a further restriction: it is optimized among the functions in \eqref{eq.aa.1}.\footnote{Note that \eqref{eq.res.1} is really not a restriction, but is introduced to avoid trivial non-unique solutions.}
In this context, we can obtain the following result:
\begin{lemma}\label{lem.1}
  The optimized value has the following relation.
  \[
    \E{\left(\Lambda\Theta- \tilde{M}_{\rm [Ind]}(\Lambda,L) \right)^2}
 \le
    \E{\left(\Lambda\Theta- \tilde{M}_{\rm [Shared.F]}(\Lambda,L) \right)^2}
   \le
  \E{\left(\Lambda\Theta- \tilde{M}_{[{\rm Shared.P}]}(\Lambda,L) \right)^2}
  \]
\end{lemma}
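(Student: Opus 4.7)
The plan is to establish both inequalities by identifying the three premiums as the minimizers of a common loss functional over three \emph{nested} classes of admissible premium functions $M(\lambda_\kappa, \ell)$. Once the nesting is established, the inequality of minima follows immediately, with the larger admissible class producing the smaller minimum value of the objective $\E{(\Lambda\Theta - M(\Lambda,L))^2}$.

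For the right inequality, I would recall that the PPOS is the pair $(\tilde\xi_{[{\rm Shared.P}]}, \tilde\gamma_{[{\rm Shared.P}]})$ obtained by fixing $\xi(\lambda_\kappa) = \lambda_\kappa$ in advance (equation \eqref{eq.5.1}) and then minimizing only over $\gamma$ in \eqref{eq.ahn130}. The PFOS, by definition \eqref{eq.a.70}, instead minimizes the same objective jointly over both $\xi$ and $\gamma$ (subject only to a harmless normalization). Since the PPOS choice $\xi(\lambda_\kappa) = \lambda_\kappa$ together with $\gamma = \tilde\gamma_{[{\rm Shared.P}]}$ is a \emph{feasible} point for the joint optimization in \eqref{eq.a.70}, and the PFOS is by construction a minimizer over the larger feasible set, the PFOS achieves an objective value no larger than the PPOS. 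The normalization constraint ${\gamma_q}(\lfloor z/2 \rfloor) = q$ is not actually restrictive because of \eqref{eq.a3}: any PPOS pair can be rescaled by $c = q/\tilde\gamma_{[{\rm Shared.P}]}(\lfloor z/2\rfloor)$ to yield an equivalent pair $(c^{-1}\tilde\xi_{[{\rm Shared.P}]}, c\,\tilde\gamma_{[{\rm Shared.P}]})$ with the same premium values and thus the same loss.

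For the left inequality, I would use the representation of the PFOS as an individualized-relativity premium given in \eqref{eq.bms10}-\eqref{eq.b5}. Namely, $\tilde M_{\rm [Shared.F]}(\lambda_\kappa, \ell) = \lambda_\kappa\,\tilde\gamma^*_{\rm [Shared.F]}(\lambda_\kappa, \ell)$, where $\tilde\gamma^*_{\rm [Shared.F]}(\lambda_\kappa, \ell)$ is a multiplicatively separable (in $\lambda_\kappa$ and $\ell$) non-negative function. On the other hand, the POI is, by \eqref{eq.ahn13}, the minimizer of $\E{(\Lambda\Theta - \Lambda\,\gamma_{\rm [Ind]}(\Lambda, L))^2}$ over \emph{all} non-negative functions $\gamma_{\rm [Ind]}(\lambda_\kappa, \ell)$. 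In particular the separable function $\tilde\gamma^*_{\rm [Shared.F]}$ is a feasible candidate, so the POI's minimum value is no larger than the value attained by PFOS. This is exactly the content of the rewriting in \eqref{eq.aa}: the PFOS optimization is the POI optimization restricted to the separable subclass \eqref{eq.aa.1}.

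The argument is essentially a soft one about nested feasible sets, so I do not expect any deep obstacle. The only point requiring a little care is the normalization constraint in \eqref{eq.a.70}, which must be shown not to shrink the attainable values of the objective; as noted above, this is handled by the rescaling argument already recorded in \eqref{eq.a3}. Once that is in place, the two inequalities follow in one line each from the monotonicity of the infimum with respect to enlargement of the feasible set.
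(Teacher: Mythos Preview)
Your proposal is correct and follows the same approach as the paper: the paper's own proof is a one-line appeal to the nested optimization settings \eqref{eq.ahn130}, \eqref{eq.ahn13}, and \eqref{eq.a.70}, and you have simply spelled out that nesting argument in detail, including the handling of the normalization constraint via \eqref{eq.a3}.
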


\begin{proof}
  We can immediately obtain the proof from the optimization settings in \eqref{eq.ahn130}, \eqref{eq.ahn13}, and \eqref{eq.a.70}.
\end{proof}

In summary, we can view the PFOS as a premium with individualized BM relativity table having restrictions on the BM relativity function, as in \eqref{eq.aa.1}. Because the POI can perfectly solve the double-counting problem, we expect the PFOS to partially solve the double-counting problem. This solution will depend on how the BM relativity function in the POI is similar to that of the PFOS, or how the multiplicative assumption in \eqref{eq.aa.1} deviates
from the BM relativity function in the POI.
The following numerical experiments compare the performance of the PFOS with that of the PPOS and POI in terms of the double counting as well as prediction power.

\begin{example}[Continued from Example \ref{ex.1}]\label{ex.3}
Under the same setting of Example \ref{ex.1}, we consider the various premiums for the PNO, PPOS, PFOS, and PIO.
In order to calculate the {\it a priori} rate function and BM relativities in the PFOS, we set the restriction as
\[
q=\tilde{\gamma}_{[\rm Shared.P]}  \left( \left\lfloor \frac{z}{2} \right\rfloor \right)
\]
in \eqref{eq.a.70}
for a unique solution to the full optimization problem.
Table \ref{rl.ex3} presents the distributions of $L$ and the BM relativities,
and Table \ref{pri.ex3} presents the {\it a priori} rate function for each scenario.
To compare the fairness and performance of the premium,
we provide the conditional expectation of relativity given the {\it a priori} rate, FIX, and HMSE as shown in Table \ref{dc.ex3}.

%We are mainly interested in comparing the PPOS, PFOS, and POI.
We first observe that
the PPOS shows the worst performance in terms of both the FIX and HMSE. Obviously,
the PPOS suffers from the double-penalty problem, as in Example \ref{ex.1}
and the conditional mean of the BM relativity in Table \ref{dc.ex3}.
It seems that the poor HMSE of PPOS is due to the double-penalty problem.
Moreover, Table \ref{dc.ex3} also shows that the PFOS has improved over the PPOS in terms of both HMSE and FIX.
The improvement in HMSE can be expected from Lemma \ref{lem.1},
although we find a rather unexpected dramatic improvement in FIX.
For further comparison of the PPOS and PFOS,
we match the relativity and the {\it a priori} rate pairs of two BMSs,
$\left(\tilde{\gamma}_{[\rm Shared.P]}, \tilde{\xi}_{[\rm Shared.P]} \right) $ and %\quad\hbox{and}\quad
$\left(\tilde{\gamma}_{[\rm Shared.F]}, \tilde{\xi}_{[\rm Shared.F]} \right)$.
While the two relativities
$\tilde{\gamma}_{[\rm Shared.P]}(\ell)$ and $\tilde{\gamma}_{[\rm Shared.F]}(\ell)$ show no much difference,
the major differences between the two BMSs are in the {\it a priori} rates. Specifically, the PFOS is biased in that
the conditional expectation of the relativities deviate from 1:
\begin{equation}\label{eq.c2}\begin{cases}
\E{\tilde{\gamma}_{[{\rm Shared.F}]}(L) | \Lambda=\lambda_{\rm low}  \,\,}= 0.291;\\
\E{\tilde{\gamma}_{[{\rm Shared.F}]}(L) | \Lambda=\lambda_{\rm mid}\,}= 0.814;  \\
\E{\tilde{\gamma}_{[{\rm Shared.F}]}(L) | \Lambda=\lambda_{\rm high} }= 1.089, \\
\end{cases}
\end{equation}
which is very similar to the case of PPOS,
\[
\begin{cases}
\E{\tilde{\gamma}_{[{\rm Shared.P}]}(L) | \Lambda=\lambda_{\rm low}  \,\,}= 0.304;\\
\E{\tilde{\gamma}_{[{\rm Shared.P}]}(L) | \Lambda=\lambda_{\rm mid}\,}= 0.805;  \\
\E{\tilde{\gamma}_{[{\rm Shared.P}]}(L) | \Lambda=\lambda_{\rm high} }= 1.069. \\
\end{cases}
\]
However, the bias in the {\it a priori} rates of PFOS compensates for the bias in \eqref{eq.c2},
\[
  \tilde{\xi}_{[{\rm Shared.F}]}(\lambda_{\rm low}) = 0.32 > 0.1\quad\hbox{and}\quad
  \tilde{\xi}_{[{\rm Shared.F}]}(\lambda_{\rm high}) = 0.84 < 0.9,
\]
and almost resolves the double-penalty problem, as the following conditional expectation of each premium shows:
\begin{equation*}
\begin{cases}
\E{\tilde{M}_{[{\rm Shared.F}]}(\Lambda, L) | \Lambda=\lambda_{\rm low}  \,\,}
=\lambda_{\rm low} \,\,\E{\tilde{\gamma}_{[{\rm Shared.F}]}(L) | \Lambda=\lambda_{\rm low}  \,} = 0.094; \\
\E{\tilde{M}_{[{\rm Shared.F}]}(\Lambda, L) | \Lambda=\lambda_{\rm mid} \,}
=\lambda_{\rm mid} \,\E{\tilde{\gamma}_{[{\rm Shared.F}]}(L) | \Lambda=\lambda_{\rm mid}\,} = 0.484;\\
\E{\tilde{M}_{[{\rm Shared.F}]}(\Lambda, L) | \Lambda=\lambda_{\rm high} }
=\lambda_{\rm high} \E{\tilde{\gamma}_{[{\rm Shared.F}]}(L) | \Lambda=\lambda_{\rm high} } =  0.913. \\
\end{cases}
\end{equation*}

Next, we compare the performance of the PFOS and POI. For a fair comparison, we match the pure relativity and the {\it a priori} rate function pairs of two BMSs:
$\left(\tilde{\gamma}_{[\rm Shared.F]}^*, \tilde{\xi}_{[\rm Shared.F]}^* \right)$ and % \quad\hbox{and}\quad
$\left(\tilde{\gamma}_{[\rm Ind]}, \tilde{\xi}_{[\rm Ind]} \right)$.
While the {\it a priori} rate is the same for both the BMSs in terms of representation, pure relativity shows a considerable difference between the two BMSs; this is the only factor affecting the difference in HMSE between the two BMSs. Despite the clear difference in relativity, the two BMSs do not show much difference in FIX, which can be explained by the hardly biased conditional pure relativity means of the PFOS:
\[
\begin{cases}
\E{\tilde{\gamma}_{[{\rm Shared.F}]}^*(\Lambda, L) | \Lambda=\lambda_{\rm low}  \,\,}= 0.943;\\
\E{\tilde{\gamma}_{[{\rm Shared.F}]}^*(\Lambda, L) | \Lambda=\lambda_{\rm mid} \,}= 0.969;  \\
\E{\tilde{\gamma}_{[{\rm Shared.F}]}^*(\Lambda, L) | \Lambda=\lambda_{\rm high} }= 1.015. \\
\end{cases}
\]

Finally, we interpret the $\triangle$ FIX and $\triangle$ HMSE columns in Table \ref{dc.ex3}.
The $\triangle$ FIX column shows the sequential changes in $\%$ of four BMSs, normalized by maximum FIX, in the order
\[
\hbox{PNO}\quad\rightarrow\quad\hbox{PPOS}\quad\rightarrow\quad\hbox{PFOS}\quad\rightarrow\quad\hbox{POI};
\]
the $\triangle$ HMSE is similarly defined.
We examine the difference from the PNO to PPOS.
While the HMSE of PPOS shows a significant improvement over that of PNO,
the PPOS creates the double-penalty problem with a FIX of 0.3075.
We also find significant improvement in the double-counting problem from the PPOS to PFOS, where the $\triangle$ FIX is $-99.3\%$, and minor improvement in HMSE, where the $\triangle$ HMSE is $-2.3\%$, and minor improvement in both the FIX and HMSE from the PFOS to PPOS.

\begin{table}[h]
\caption{Various BM relativities and distribution of $L$ under the -1/+2 system in Example \ref{ex.3}}\label{rl.ex3}
\centering
\resizebox{\linewidth}{!}{%%% resize table
%\resizebox{!}{\textheight}{%%% resize table
\begin{tabular}{ l l*{25}{c}}
 \hline
 	&Risk       			& \multicolumn{10}{c}{Level $\ell$}   \\ \cline{3-12}
 	&group $\kappa$	& 1& 2& 3& 4& 5& 6& 7& 8& 9& 10\\% &&  FIX & HMSE\\
\hline
\rowcolor{Gray}
$\tilde{\gamma}_{[{\rm Shared.No}]}$&
&1 	&	1 	&	1 	&	1 	&	1 	&	1 	&	1 	&	1	&	1 	&	1 \\%&&	0 	&	0.2853 	\\
$\tilde{\gamma}_{[{\rm Shared.P}]}$&
&0.240 	&	0.364 	&	0.390 	&	0.489 	&	0.544 	&	0.647 	&	0.752 	&	0.913 	&	1.160 	&	1.675 \\%	&&	0.3075 	&	0.1629 	\\
\rowcolor{Gray}
$\tilde{\gamma}_{\rm [Shared.F]}$&&	
0.224 	&	0.357 	&	0.382 	&	0.488 	&	0.544 	&	0.651 	&	0.759 	&	0.926 	&	1.183 	&	1.722\\% 	&&	0.0022 	&	0.1563 	\\
$\tilde{\gamma}_{\rm [Shared.F]}^*$ & low&	
0.724 	&	1.156 	&	1.237 	&	1.579 	&	1.761 	&	2.108 	&	2.458 	&	2.998 	&	3.830 	&	5.576\\% 	&&	0.0022 	&	0.1563 	\\
&mid	&	
0.266 	&	0.425 	&	0.455 	&	0.580 	&	0.647 	&	0.774 	&	0.903 	&	1.102 	&	1.407 	&	2.049\\% && & \\
&high	&	
0.208 	&	0.333 	&	0.356 	&	0.454 	&	0.507 	&	0.606 	&	0.707 	&	0.863 	&	1.102 	&	1.604\\% && &\\
\rowcolor{Gray}
$\tilde{\gamma}_{\rm [Ind]}$
& low	&	0.763 	&	1.330 	&	1.397 	&	1.888 	&	2.041 	&	2.457 	&	2.694 	&	3.059 	&	3.361 	&	3.725\\% 	&&	0 	&	0.1553 	\\
\rowcolor{Gray}
&mid&	
0.296 	&	0.475 	&	0.511 	&	0.658 	&	0.737 	&	0.884 	&	1.024 	&	1.228 	&	1.505 	&	1.957\\% 	&&		&		\\
\rowcolor{Gray}
&high&	
0.180 	&	0.286 	&	0.309 	&	0.398 	&	0.451 	&	0.547 	&	0.651 	&	0.813 	&	1.072 	&	1.626\\% 	&&		&		\\
\hline\hline	
$\P{L=\ell}$ && 0.414& 0.048& 0.059& 0.030& 0.032& 0.029& 0.036& 0.049& 0.087& 0.217 \\		
\hline												
\end{tabular}
} %%%
\end{table}

\begin{table}[h!]
\caption{The {\it a priori} rate functions in Example \ref{ex.3}}\label{pri.ex3}
\centering
\begin{tabular}{ l l*{25}{c}}
 \hline
  	& \multicolumn{3}{c}{Risk group $\kappa$}  \\ \cline{2-4}
	& low& mid& high \\
\hline
\rowcolor{Gray}
$\tilde{\xi}_{[{\rm Shared.No}]}$ 	& 0.10 	&	0.50 	&	0.90 	\\	
$\tilde{\xi}_{[{\rm Shared.P}]}$ 	& 0.10 	&	0.50 	&	0.90 	\\						
\rowcolor{Gray}
$\tilde{\xi}_{[{\rm Shared.F}]}$	& 0.32 	&	0.59 	&	0.84 	\\		
$\tilde{\xi}_{[{\rm Shared.F}]}^*$	& 0.10	&	0.50 	&	0.90 	\\		
\rowcolor{Gray}
$\tilde{\xi}_{[{\rm Ind}]}$	 		& 0.10 	&	0.50 	&	0.90 	\\									
\hline												
\end{tabular}
\end{table}

\begin{table}[h!]
\caption{Summary statistics of each BMS in Example \ref{ex.3}}\label{dc.ex3}
\centering
\resizebox{\linewidth}{!}{%%% resize table
\begin{tabular}{ l l*{25}{c}}
 \hline
 Method    && $\E{\tilde{\gamma}| \Lambda=\lambda_{\rm low} }$&
 $\E{\tilde{\gamma}| \Lambda=\lambda_{\rm mid}}$&
 $\E{\tilde{\gamma}| \Lambda=\lambda_{\rm high} }$ &&
 FIX & $\triangle$ FIX & HMSE & $\triangle$ HMSE \\
\hline
\rowcolor{Gray}
Shared.No 	 	&& 1		& 1		& 1		&& 0 	 & - 	    & 0.2853 & -\\	
Shared.P 		 	&& 0.304 	& 0.805	& 1.069	&& 0.3075 &  100$\%$  & 0.1629 & -42.9$\%$\\
\rowcolor{Gray}
Shared.F 		 	&& 0.291 	& 0.814 	& 1.089	&& 0.0022 & -99.3$\%$ & 0.1563 & -2.3$\%$\\		
Shared.F*			&& 0.943 	& 0.969 	& 1.015	&& 0.0022 & -& 0.1563 & -\\		
\rowcolor{Gray}
Ind			 	&& 1 	& 1	 	& 1		&& 0 	 & -0.7$\%$  & 0.1553 & -0.4$\%$\\		\hline												
\end{tabular}
} %%%
\end{table}

\end{example}

In the following subsection, we heuristically explain how the PFOS can fully resolve the double-counting problem and provide numerical evidence supporting our heuristic.

\subsection{Double-Counting Problem and Full Optimization}
In the previous subsection, we showed that the PFOS can improve the HMSE to some extent; this can be expected because the PFOS optimizes the HMSE over the {\it a posteriori} as well as {\it a priori} rate functions, unlike the PPOS, where the {\it a priori} rate is predetermined. Following the same logic, we argue that the PFOS can improve the double-counting problem to some extent, depending on how the multiplicative assumption in \eqref{eq.aa.1} deviates from the BM relativity
function in the BMS with the individualized BM relativity table.
However, we cannot expect the PFOS to fully resolve the double-counting problem as shown in Example \ref{ex.3}. This subsection explains that we can expect surprising success of the PFOS compared to the PPOS in resolving the double-counting problem, and that the key reason for the success is
that the PFOS enables the {\it a posteriori} risk classification of the BMS with a shared BM relativity table to be involved with {\it a priori} risk characteristics.

First, recall that Remark \ref{rem.3} in Section \ref{sec.3.2} explains the double-counting problem in the PPOS as follows:

\smallskip
While the policyholder's {\it a posteriori} rate in the PPOS is a function of claim histories only, this does not mean that the {\it a posteriori} rate in the PPOS is not affected by the {\it a priori} rate. Actually, the policyholder's {\it a posteriori} rate is essentially affected by the {\it a priori} rate, because the claim histories are affected by the {\it a priori} rate.
Because the claim histories are affected by the {\it a priori} rate, the {\it  a priori} rate has to be used in the {\it a posteriori} risk classification to remove the {\it a priori} rate effect on the {\it a posteriori} risk classification process. This is how the POI and the Bayesian premium resolve the double-counting problem.
\smallskip

However, the {\it BMS with a shared BM relativity table} does not allow for the {\it  a posteriori} risk classification to be a
function of the {\it a priori} rate, since it is a function of the policyholder's BM level only.
The remedy is to tune the predetermined {\it  a priori} rate function instead. The PFOS allows for freely choosing the {\it a priori} rate function, but the newly chosen {\it a priori} rate function is in fact really the multiplication of the predetermined {\it a priori} rate $\lambda_\kappa$ and the correction for the {\it a posteriori} rate function, as shown in \eqref{eq.b5}.
Thus, in the alternative representation of the PFOS in \eqref{eq.b5}, the PFOS was also interpreted as the {\it BMS with individualized BM relativity table} and restriction in \eqref{eq.aa.1} on BM relativity.
While the previous section showed that we expect only limited improvement in the double-counting problem depending on how the multiplicative restriction in \eqref{eq.aa.1} deviates from the BM relativity function in the POI, the inclusion of the {\it a priori} rate in the {\it a posteriori} risk classification plays the crucial rule of removing the {\it a priori} rate from the policyholder's BM level in the {\it  a posteriori} risk classification process and resolving the double-penalty problem.
In fact, the following lemma explains that by merely allowing the {\it a priori} rate to participate in determining the {\it a posteriori} risk classification, we can fully resolve the double-penalty problem in
the {\it BMS with a shared BM relativity table}, regardless of the BM relativity table given.

\begin{lemma}\label{lem.2}
  For any given positive BM relativities
  \[
  \gamma_{[{\rm Shared}]}(1), \cdots, \gamma_{[{\rm Shared}]}(z),
  \]
  consider the premium in \eqref{eq.5} with the choice of the {\it a priori} rate function given by
  \begin{equation}\label{eq.lem.2}
  \lambda_{[{\rm Shared}]}(\boldsymbol{x}_\kappa)=\frac{\lambda_\kappa}{\E{\gamma_{[{\rm Shared}]}(L)\big\vert \Lambda=\lambda_\kappa}}, \quad \kappa=1, \cdots, \mathcal{K}.
  \end{equation}
  Then, we have
  \begin{equation}\label{eq.lem.1}
  \E{ M_{[{\rm Shared}]}\left(\Lambda, L\right) \big\vert \Lambda= \lambda_\kappa}=\lambda_\kappa
  \end{equation}
  and
  \[
  {\rm FIX}\left( M_{[{\rm Shared}]}\right)=0.
  \]
\end{lemma}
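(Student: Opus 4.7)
The plan is to proceed by direct substitution into the definitions, since the choice of $\lambda_{[{\rm Shared}]}$ in \eqref{eq.lem.2} is exactly what is needed to normalize the conditional expectation. I would first verify the unbiasedness statement \eqref{eq.lem.1}, then use this to deduce ${\rm FIX}(M_{[{\rm Shared}]}) = 0$ from the definition of FIX in \eqref{eq.30}.

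For \eqref{eq.lem.1}, plug the prescribed {\it a priori} rate function into $M_{[{\rm Shared}]}(\lambda_\kappa, \ell) = \xi_{[{\rm Shared}]}(\lambda_\kappa)\gamma_{[{\rm Shared}]}(\ell)$ and then take the conditional expectation. Since $\xi_{[{\rm Shared}]}(\Lambda)$ is a function of $\Lambda$ alone, it can be pulled outside the conditional expectation, giving
\[
\E{M_{[{\rm Shared}]}(\Lambda, L) \big\vert \Lambda = \lambda_\kappa}
= \xi_{[{\rm Shared}]}(\lambda_\kappa)\,\E{\gamma_{[{\rm Shared}]}(L)\big\vert \Lambda = \lambda_\kappa}.
\]
Substituting the definition in \eqref{eq.lem.2}, the denominator cancels the conditional expectation of the relativity, leaving exactly $\lambda_\kappa$.

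For the FIX statement, the pure relativity becomes
\[
\frac{M_{[{\rm Shared}]}(\Lambda, L)}{\Lambda}
= \frac{\xi_{[{\rm Shared}]}(\Lambda)\gamma_{[{\rm Shared}]}(L)}{\Lambda}
= \frac{\gamma_{[{\rm Shared}]}(L)}{\E{\gamma_{[{\rm Shared}]}(L)\big\vert \Lambda}},
\]
using the defining formula for $\xi_{[{\rm Shared}]}$ once more. Taking the conditional expectation given $\Lambda$ again allows the cancellation of $\E{\gamma_{[{\rm Shared}]}(L)\vert\Lambda}$ in numerator and denominator, so that $\E{M_{[{\rm Shared}]}(\Lambda, L)/\Lambda \vert \Lambda} \equiv 1$ almost surely. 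A (nonrandom) constant has zero variance, so the numerator of the FIX ratio in \eqref{eq.30} vanishes, yielding ${\rm FIX}(M_{[{\rm Shared}]}) = 0$ (assuming the denominator is positive, i.e., the premium is nontrivial in the sense stated after \eqref{eq.30}).

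There is essentially no obstacle to this argument; it is a direct verification. The only subtle point is to recognize that the construction in \eqref{eq.lem.2} is precisely the correction needed so that the {\it a priori} rate absorbs the conditional bias of the shared relativity function. This matches the heuristic given in Section 4.2: even though $\gamma_{[{\rm Shared}]}(L)$ alone depends on $\Lambda$ through the claim histories, rescaling $\xi_{[{\rm Shared}]}$ by the reciprocal of $\E{\gamma_{[{\rm Shared}]}(L)\vert \Lambda}$ removes that dependence at the level of the pure relativity, regardless of the particular relativity vector one starts from.
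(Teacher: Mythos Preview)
Your proposal is correct and follows precisely the approach the paper takes, only with more detail: the paper simply states that \eqref{eq.lem.1} is ``trivial from the definition in \eqref{eq.lem.2}'' and that \eqref{eq.lem.1} in turn implies ${\rm FIX}(M_{[{\rm Shared}]})=0$. Your explicit substitution and the observation that $\E{M_{[{\rm Shared}]}(\Lambda,L)/\Lambda\mid\Lambda}\equiv 1$ forces the numerator of \eqref{eq.30} to vanish are exactly what the paper leaves implicit.
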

\begin{proof}
  The proof of \eqref{eq.lem.1} is trivial from the definition in \eqref{eq.lem.2}. Then, \eqref{eq.lem.1} concludes that
    \[
  {\rm FIX}\left( M_{[{\rm Shared}]}\right)=0.
  \]
\end{proof}
Now, through a numerical experiment, we can confirm that the main causes of double-counting problem in the BMS is the predetermined {\it  a priori} premium and does not allow for correcting its unfair {\it a posteriori} premium. The double-counting problem can be significantly improved by just allowing for the {\it a priori} rate function to affect the optimization process.

\begin{example}[Continued from Example \ref{ex.1}] \label{ex.4}
We consider a numerical study with the premium obtained from full optimization. As shown in Algorithm \ref{algo.a.1} in the appendix, the premium is obtained from full optimization by applying the coordinate descent algorithm, where the BM relativity and the {\it a priori} rate function are iteratively updated. Specifically, we begin with the case of no {\it a posteriori} risk classification,
\begin{equation}\label{eq.b2}
\left(\gamma_{[{\rm Shared.F}]}^{[0]}, \xi_{[{\rm Shared.F}]}^{[0]}\right):=\left(\tilde{\gamma}_{[{\rm Shared.No}]}, \tilde{\xi}_{[{\rm Shared.No}]}\right),
\end{equation}
which is then iteratively updated in the order
\begin{equation*}%\label{eq.b1}
\begin{aligned}
&\left(\gamma_{[{\rm Shared.F}]}^{[0]}, \xi_{[{\rm Shared.F}]}^{[0]}\right) \rightarrow
\left(\gamma_{[{\rm Shared.F}]}^{[1]}, \xi_{[{\rm Shared.F}]}^{[0]}\right) \rightarrow
\left(\gamma_{[{\rm Shared.F}]}^{[1]}, \xi_{[{\rm Shared.F}]}^{[1]}\right) \rightarrow\\
&\quad\quad\quad\quad\left(\gamma_{[{\rm Shared.F}]}^{[2]}, \xi_{[{\rm Shared.F}]}^{[1]}\right) \rightarrow
 \cdots \rightarrow
\left(\gamma_{[{\rm Shared.F}]}^{[m]}, \xi_{[{\rm Shared.F}]}^{[m]}\right)
\end{aligned}
\end{equation*}
until the termination condition is satisfied. For convenience, we name each relativity pair and the {\it a priori} rate in sequential order;
for example, we name
\[
\left(\gamma_{[{\rm Shared.F}]}^{[0]}, \xi_{[{\rm Shared.F}]}^{[0]}\right)
\]
as the first pair, and
\[
\left(\gamma_{[{\rm Shared.F}]}^{[1]}, \xi_{[{\rm Shared.F}]}^{[0]}\right)
\]
as the second pair. Note that the first pair corresponds to the relativity and the {\it a priori} rate of PNO  as shown in \eqref{eq.b2}, and the second pair corresponds to the relativity and the {\it a priori} rate of  PPOS
\[
\left(\gamma_{[{\rm Shared.F}]}^{[1]}, \xi_{[{\rm Shared.F}]}^{[0]}\right) =
\left(\tilde{\gamma}_{[{\rm Shared.P}]}, \tilde{\xi}_{[{\rm Shared.P}]}\right),
\]
which is implied by Lemma \ref{lem.a2} in the appendix.

In the numerical study, we calculate the HMSE and FIX for each iterated pair, to find the updated step that most significantly contributed to the HMSE or FIX.
The results are presented in Table \ref{tab.ex.4} and Figure \ref{fig.ex.4}, where all the scenarios seem to share similar patterns in the increase and decrease of HMSE and FIX.
The first pair has the largest HMSE but no double penalty with zero FIX.
While the second pair accomplished a major reduction in HMSE, it created the double-penalty problem with a large FIX.
However, the third pair accomplished a major reduction in FIX and some reduction in HMSE. After the third pair, both the HMSE and FIX seem stabilize. An important implication of this numerical study is that the major improvement in terms of double penalty comes from the first participation of the {\it a priori} rate in the optimization problem.

\begin{table}[h!]
\caption{Tracing the HMSE and FIX along the coordinate descent algorithm under various scenarios}\label{tab.ex.4}
\centering
\resizebox{\textwidth}{!}{%%% resize table
\begin{tabular}{l l g c g c g c g c}
  \hline
  \multirow{2}{*}{} & Relativity ftn & $\gamma_{[{\rm Shared.F}]}^{[0]}$ & $\gamma_{[{\rm Shared.F}]}^{[1]}$ & $\gamma_{[{\rm Shared.F}]}^{[1]}$ & $\gamma_{[{\rm Shared.F}]}^{[2]}$
  & $\gamma_{[{\rm Shared.F}]}^{[2]}$ & $\cdots$ & $\gamma_{[{\rm Shared.F}]}^{[m]}$ \\
  & Priori ftn    & $\xi_{[{\rm Shared.F}]}^{[0]}$ & $\xi_{[{\rm Shared.F}]}^{[0]}$ & $\xi_{[{\rm Shared.F}]}^{[1]}$  & $\xi_{[{\rm Shared.F}]}^{[1]}$ & $\xi_{[{\rm Shared.F}]}^{[2]}$
  & $\cdots$ &  $\xi_{[{\rm Shared.F}]}^{[m]}$ \\
   \hline
  \multirow{2}{*}{Scenario I} 	& FIX 	& 0.0000& 0.3075& 0.0013& 0.0047& 0.0021& $\cdots$ & 0.0022 \\
  							& HMSE 	& 0.2853& 0.1629& 0.1564& 0.1563& 0.1563& $\cdots$ & 0.1563 \\
  \hline
  \multirow{2}{*}{Scenario II} 	& FIX 	& 0.0000& 0.0182& 0.0004& 0.0004& 0.0004& $\cdots$ & 0.0004 \\
  							& HMSE 	& 0.2053& 0.0988& 0.0970& 0.0970& 0.0970&$\cdots$ & 0.0970 \\
  \hline
  \multirow{2}{*}{Scenario III} & FIX 	& 0.0000& 0.0626 & 0.0021 & 0.0023 & 0.0023 & $\cdots$ & 0.0023 \\
  							& HMSE 	& 0.8853& 0.5598 & 0.5476 & 0.5476 & 0.5476 & $\cdots$ & 0.5476 \\
  \hline
  \multirow{2}{*}{Scenario IV}  & FIX 	& 0.0000 & 0.4686 & 0.0075 & 0.0113 & 0.0093 &  $\cdots$ & 0.0096 \\
  							& HMSE 	& 0.3973& 0.2560 & 0.2474 & 0.2473 & 0.2473 &  $\cdots$ & 0.2473 \\
  \hline
\end{tabular}
}

\end{table}

\begin{figure}[p]
     \centering
    \subfloat[Trace plot of the HMSE and FIX under Scenario I]
    {%
      \includegraphics[width=0.65\textwidth]{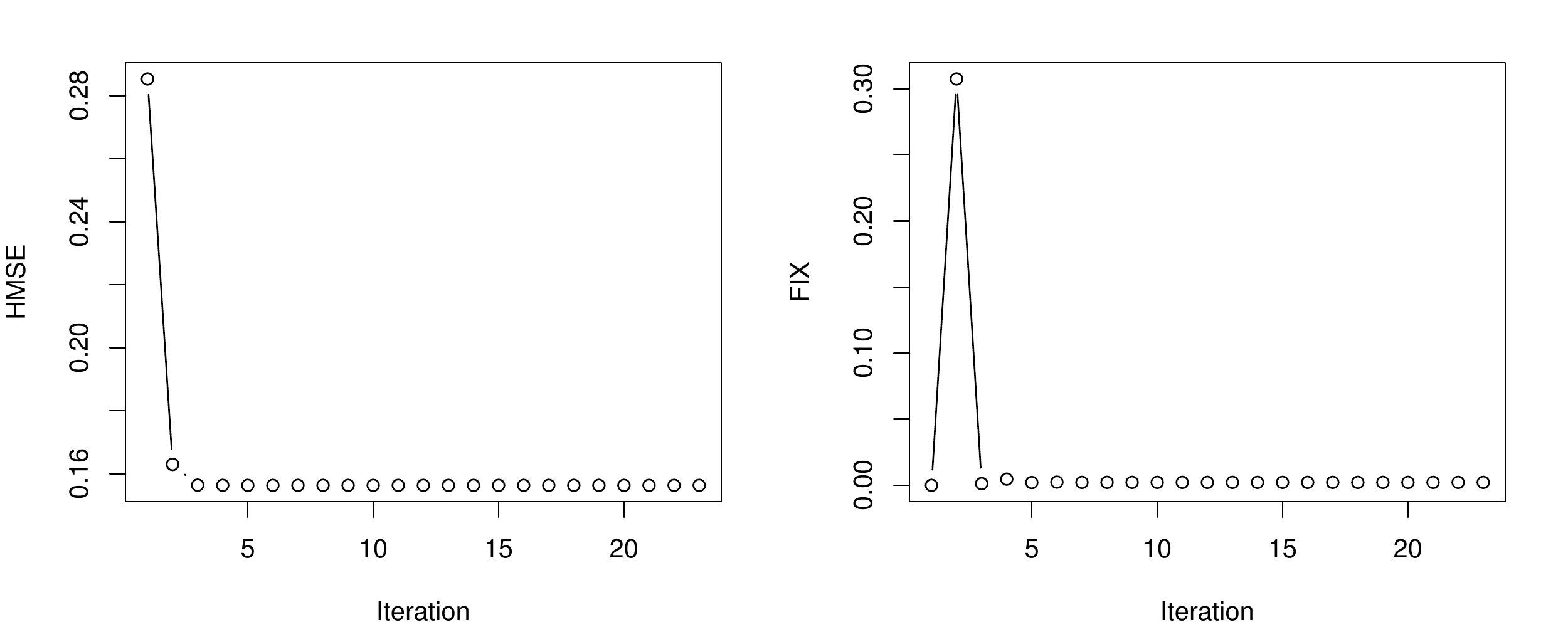}
    }

    \subfloat[Trace plot of the HMSE and FIX under Scenario II]
    {%
      \includegraphics[width=0.65\textwidth]{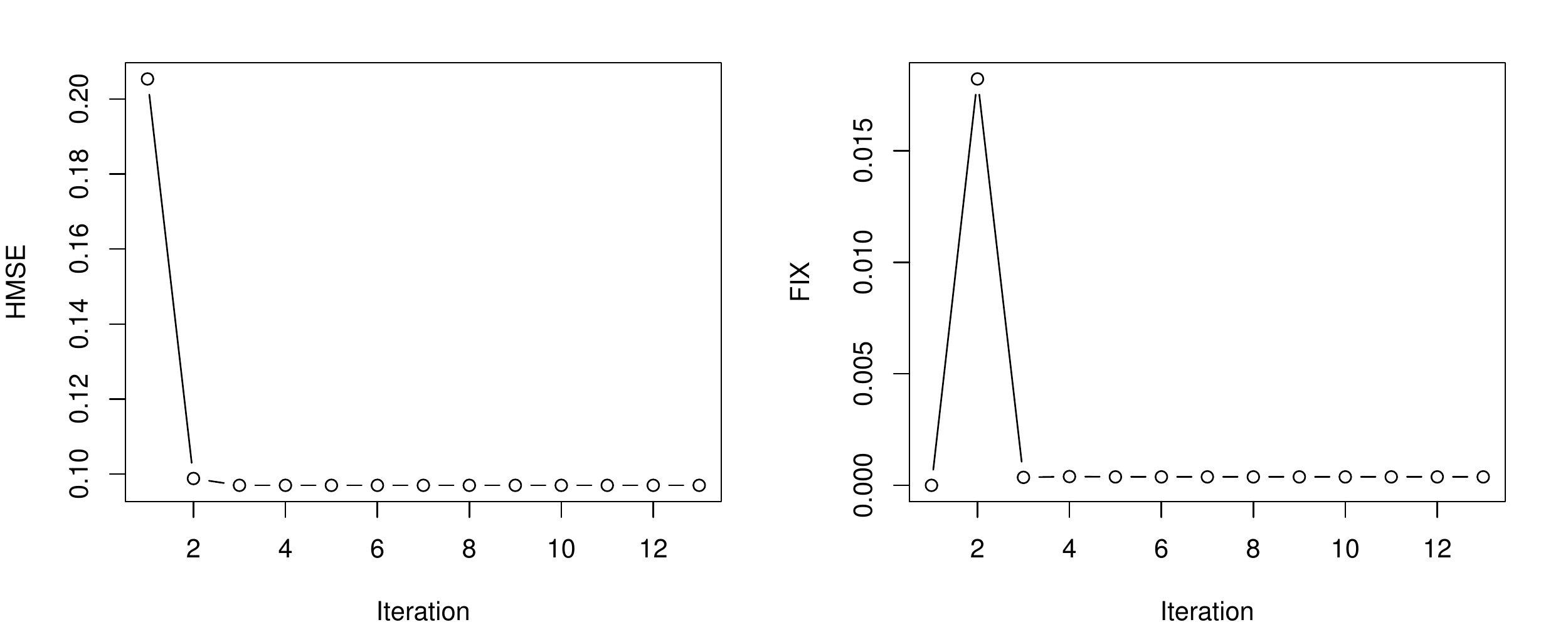}
    }\\

      \subfloat[Trace plot of the HMSE and FIX under Scenario III]
      {%
      \includegraphics[width=0.65\textwidth]{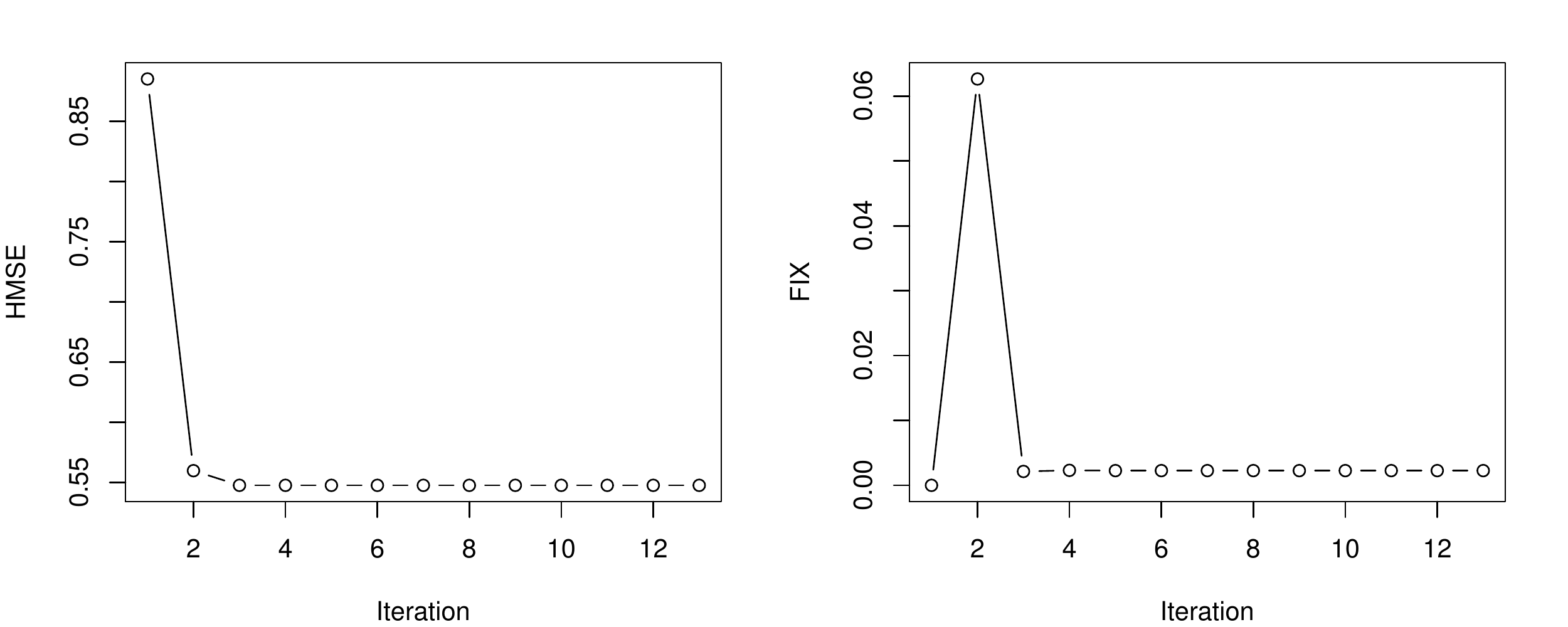}
    }\\

    \subfloat[Trace plot of the HMSE and FIX under Scenario IV]
    {%
      \includegraphics[width=0.65\textwidth]{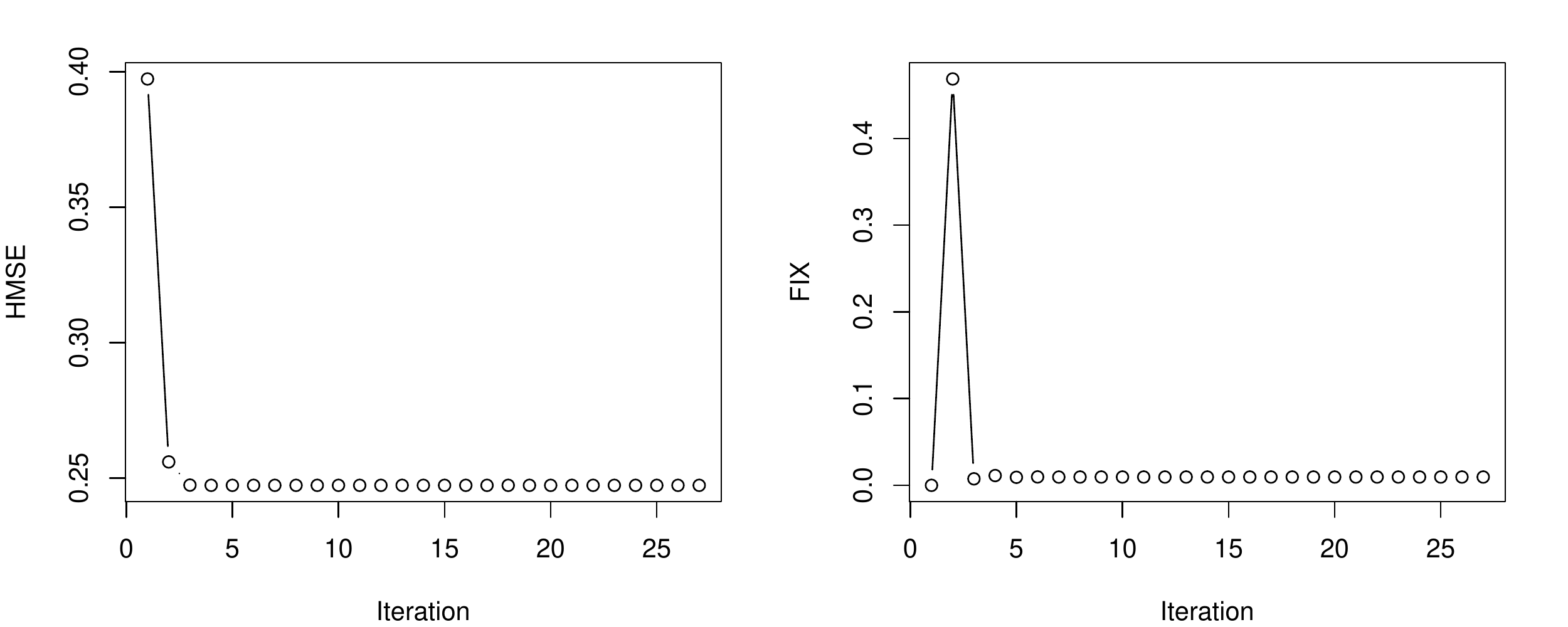}
    }
\caption{Trace plots of the HMSE and FIX along the coordinate descent algorithm under various settings until the termination condition is satisfied}
\label{fig.ex.4}
  \end{figure}

\end{example}

\section{Real Data Analysis}\label{sec.5}

To examine to what extent the double-counting problem can be resolved with the proposed full optimization process in an actual BMS,
we use a collision coverage dataset for new and old vehicles from the Wisconsin Local Government Property Insurance Fund (LGPIF) in \citet{Frees4}.
Collision coverage provides for coverage for impact of a vehicle with an object, impact of a vehicle with an attached vehicle, or overturn of a vehicle.
In our analysis,
we use the longitudinal data over the policy years 2006 to 2010 for a total of 497 governmental entities.
We have two categorical variables, the entity type with six levels and coverage at three levels, as shown
in Table \ref{tab.x} (similar to Table C.6 in Oh et al. (2019)).

Specifically, for the frequency random effect model,
we consider
\[
  N_{i,t}\big\vert \left(\Lambda_{i}, \Theta_i\right)=\left(\lambda_{\kappa}, \theta_i\right) \iid
        {\rm Pois}(\lambda_{\kappa}\theta_i)
\quad \hbox{with} \quad \lambda_\kappa=\exp(\boldsymbol{x}_{\kappa}\boldsymbol{\beta}),
\]
where $\boldsymbol{x}_\kappa$ is a vector of the two categorical variables in Table \ref{tab.x} for the $\kappa$-th risk group, and $\boldsymbol{\beta}$ is the corresponding parameters.
For the random effect, we assume a gamma distribution with the mean of 1 and dispersion parameter $\psi$, that is,
$\Theta_i \sim {\rm Gamma}(1, \psi)$.

\subsection{Estimation Result}

Table \ref{est} summarizes the estimation results of the frequency random effect model.
Note that the asterisk * indicates that the parameters are significant at the 0.05. % level.
The results are quite intuitive.
For example, the county entity type is expected to incur more accidents
than any other entity type with the same amount of coverage.
The risk group with higher coverage amount tends to incur more accidents for the same entity type.

From the estimates in Table \ref{est},
we find that  Table \ref{bm} presents the results of various relativities under the -1/+2 transition rule; that is, the shared relativity table from partial optimization
$\tilde{\gamma}_{[{\rm Shared.P}]}$, shared relativity table from full optimization $\tilde{\gamma}_{[{\rm Shared.F}]}$, pure relativity table from full optimization $\tilde{\gamma}^*_{[{\rm Shared.F}]}$, and individualized relativity table $\tilde{\gamma}_{[{\rm Ind}]}$.
In our data analysis, we set 18 different risk class groups
by combining two categorical variables, the entity type and coverage group,
so that
the BMS with individualized BM relativity (shared.F$^*$ and Ind) has 18 different BM relativity tables for the 18 risk class groups.
To compare the premiums under various methods, we report the FIX and HMSE as measuring the double-counting problem and quality of predictor, respectively. The {\it a priori} rate functions are summarized in Table \ref{pri.rate.ftn}.
Note that the {\it a priori} rate functions $\tilde{\xi}_{[{\rm Shared.P}]}$ and $\tilde{\xi}_{[{\rm Ind}]}$ are given by the estimated {\it a priori} rate $\hat{\lambda}_\kappa=\exp(\boldsymbol{x}_{\kappa}\boldsymbol{\hat{\beta}})$ obtained from the fixed effect estimates in Table \ref{est}.

From Table \ref{bm}, we observe the impact of using the proposed optimization process on the BM premium.
We can provide explanations similar to those in Example \ref{ex.3}.
 The PFOS seems to mitigate the double-counting problem up to a level similar to the individualized BMS level with a much smaller FIX than that from the PPOS, and a bit smaller
  HMSE value, suggesting that the proposed optimization process leads to a slight increase in prediction power.

\begin{table}[h!]
\caption{Estimation result}
\centering
\begin{tabular}{ l r r r r r r r  }
 \hline
parameter& estimate & std.error & \multicolumn{1}{c}{p-value} \\
 \hline
 \multicolumn{4}{l}{ {\bf Fixed effect}} \\
(Intercept)		&	-3.219&	0.402&	$<$.0001 &* \\%-8.00& * \\
TypeCity			& 	0.921&	0.424&	0.0298 	&* \\%2.17&*	\\
TypeCounty		&	2.066&	0.420&	$<$.0001 &* \\%4.92&*	\\
TypeSchool		&	0.748&	0.392&	0.0565	&  \\%1.91&	\\
TypeTown			&	-1.194&	0.470&	0.0111 	&* \\%-2.54&*	\\
TypeVillage		&	0.229&	0.409&	0.5760	&  \\%0.56&	\\
catCoverage2		&	1.439&	0.206&	$<$.0001 &* \\%6.97&*	\\
catCoverage3		&	2.344&	0.224&	$<$.0001 &* \\%10.47&* \\
\hline
 \multicolumn{4}{l}{ {\bf Random effect}} \\
 $\psi$			& 0.782& \\
 \hline
\end{tabular}
\label{est}
\end{table}

\begin{table}
\caption{Various BM relativities and distribution of $L$ under the -1/+2 transition rule}
\centering
\resizebox{\linewidth}{!}{%%% resize table
\begin{tabular}{ l l*{25}{c}}
 \hline
 			 &Risk       & \multicolumn{10}{c}{Level $\ell$ } \\ \cline{3-12}
 	 &group $\kappa$	& 1& 2& 3& 4& 5& 6& 7& 8& 9& 10 &&  FIX & HMSE\\
\hline
\rowcolor{Gray}
$\tilde{\gamma}_{[{\rm Shared.P}]}$&
&	0.16 	&	0.22 	&	0.23 	&	0.27 	&	0.29 	&	0.32 	&	0.36 	&	0.43 	&	0.55 	&	1.23 	&&	0.511 	&	0.871 	\\
%$\tilde{\gamma}_{\rm [Shared.F]}$&
%&	0.46 	&	0.71 	&	0.75 	&	0.92 	&	1.00 	&	1.15 	&	1.29 	&	1.51 	&	1.95 	&	4.32 	&&	0.017 	&	0.844 	\\
$\tilde{\gamma}_{\rm [Shared.F]}$& 	&	0.13 	&	0.20 	&	0.22 	&	0.26 	&	0.29 	&	0.33 	&	0.37 	&	0.43 	&	0.56 	&	1.23 	&&	0.017 	&	0.844 	\\
\rowcolor{Gray}
$\tilde{\gamma}_{\rm [Shared.F]}^*$ & 1	&	0.95 	&	1.47 	&	1.55 	&	1.89 	&	2.05 	&	2.35 	&	2.64 	&	3.11 	&	4.00 	&	8.87 	&&	0.017 	&	0.844 	\\
\rowcolor{Gray}
&	2	&	0.48 	&	0.74 	&	0.78 	&	0.96 	&	1.04 	&	1.19 	&	1.34 	&	1.57 	&	2.02 	&	4.49 	&&		&		\\
\rowcolor{Gray}
&	3	&	0.26 	&	0.40 	&	0.42 	&	0.51 	&	0.56 	&	0.64 	&	0.72 	&	0.84 	&	1.08 	&	2.40 	&&		&		\\
\rowcolor{Gray}
&	4	&	0.72 	&	1.10 	&	1.16 	&	1.42 	&	1.54 	&	1.77 	&	1.99 	&	2.34 	&	3.01 	&	6.68 	&&		&		\\
\rowcolor{Gray}
&	5	&	0.26 	&	0.39 	&	0.42 	&	0.51 	&	0.55 	&	0.63 	&	0.71 	&	0.83 	&	1.07 	&	2.38 	&&		&		\\
\rowcolor{Gray}
&	6	&	0.17 	&	0.26 	&	0.28 	&	0.34 	&	0.37 	&	0.42 	&	0.48 	&	0.56 	&	0.72 	&	1.60 	&&		&		\\
\rowcolor{Gray}
&	7	&	0.31 	&	0.47 	&	0.49 	&	0.60 	&	0.66 	&	0.75 	&	0.85 	&	0.99 	&	1.28 	&	2.84 	&&		&		\\
\rowcolor{Gray}
&	8	&	0.16 	&	0.24 	&	0.26 	&	0.31 	&	0.34 	&	0.39 	&	0.44 	&	0.52 	&	0.67 	&	1.48 	&&		&		\\
\rowcolor{Gray}
&	9	&	0.13 	&	0.20 	&	0.21 	&	0.25 	&	0.28 	&	0.32 	&	0.36 	&	0.42 	&	0.54 	&	1.20 	&&		&		\\
\rowcolor{Gray}
&	10	&	0.79 	&	1.22 	&	1.29 	&	1.57 	&	1.71 	&	1.96 	&	2.20 	&	2.59 	&	3.33 	&	7.39 	&&		&		\\
\rowcolor{Gray}
&	11	&	0.28 	&	0.43 	&	0.46 	&	0.56 	&	0.61 	&	0.70 	&	0.78 	&	0.92 	&	1.19 	&	2.63 	&&		&		\\
\rowcolor{Gray}
&	12	&	0.18 	&	0.28 	&	0.30 	&	0.36 	&	0.39 	&	0.45 	&	0.51 	&	0.59 	&	0.76 	&	1.70 	&&		&		\\
\rowcolor{Gray}
&	13	&	0.99 	&	1.52 	&	1.60 	&	1.96 	&	2.13 	&	2.44 	&	2.74 	&	3.22 	&	4.14 	&	9.19 	&&		&		\\
\rowcolor{Gray}
&	14	&	0.93 	&	1.43 	&	1.51 	&	1.84 	&	2.00 	&	2.29 	&	2.57 	&	3.03 	&	3.89 	&	8.64 	&&		&		\\
\rowcolor{Gray}
&	15	&	0.61 	&	0.93 	&	0.99 	&	1.20 	&	1.31 	&	1.50 	&	1.69 	&	1.98 	&	2.55 	&	5.66 	&&		&		\\
\rowcolor{Gray}
&	16	&	0.93 	&	1.43 	&	1.51 	&	1.84 	&	2.00 	&	2.29 	&	2.58 	&	3.03 	&	3.90 	&	8.66 	&&		&		\\
\rowcolor{Gray}
&	17	&	0.40 	&	0.62 	&	0.65 	&	0.80 	&	0.87 	&	0.99 	&	1.12 	&	1.32 	&	1.69 	&	3.76 	&&		&		\\
\rowcolor{Gray}
&	18	&	0.23 	&	0.35 	&	0.37 	&	0.45 	&	0.49 	&	0.56 	&	0.63 	&	0.75 	&	0.96 	&	2.13 	&&		&		\\

$\tilde{\gamma}_{\rm [Ind]}$& 	
1&	0.92 	&	1.65 	&	1.69 	&	2.39 	&	2.52 	&	3.16 	&	3.40 	&	3.96 	&	4.29 	&	4.80 	&&	0.000 	&	0.834 	\\
&2&	0.62 	&	1.03 	&	1.09 	&	1.44 	&	1.57 	&	1.86 	&	2.07 	&	2.35 	&	2.63 	&	2.97 	&&		&		\\
&3&	0.35 	&	0.55 	&	0.59 	&	0.76 	&	0.85 	&	1.02 	&	1.17 	&	1.38 	&	1.66 	&	2.07 	&&		&		\\
&4&	0.77 	&	1.32 	&	1.39 	&	1.87 	&	2.02 	&	2.43 	&	2.67 	&	3.03 	&	3.32 	&	3.68 	&&		&		\\
&5&	0.34 	&	0.55 	&	0.59 	&	0.75 	&	0.84 	&	1.00 	&	1.15 	&	1.37 	&	1.64 	&	2.06 	&&		&		\\
&6&	0.16 	&	0.25 	&	0.27 	&	0.35 	&	0.40 	&	0.48 	&	0.58 	&	0.73 	&	0.98 	&	1.55 	&&		&		\\
&7&	0.42 	&	0.68 	&	0.73 	&	0.94 	&	1.05 	&	1.24 	&	1.41 	&	1.64 	&	1.92 	&	2.30 	&&		&		\\
&8&	0.13 	&	0.20 	&	0.22 	&	0.28 	&	0.32 	&	0.39 	&	0.47 	&	0.60 	&	0.84 	&	1.45 	&&		&		\\
&9&	0.05 	&	0.09 	&	0.09 	&	0.12 	&	0.14 	&	0.17 	&	0.21 	&	0.28 	&	0.43 	&	1.21 	&&		&		\\
&10&	0.81 	&	1.41 	&	1.47 	&	2.01 	&	2.16 	&	2.61 	&	2.86 	&	3.25 	&	3.55 	&	3.93 	&&		&		\\
&11&	0.39 	&	0.62 	&	0.67 	&	0.86 	&	0.96 	&	1.14 	&	1.30 	&	1.52 	&	1.80 	&	2.19 	&&		&		\\
&12&	0.19 	&	0.29 	&	0.32 	&	0.41 	&	0.46 	&	0.56 	&	0.66 	&	0.83 	&	1.08 	&	1.62 	&&		&		\\
&13&	0.98 	&	1.75 	&	1.77 	&	2.54 	&	2.59 	&	3.34 	&	3.45 	&	4.17 	&	4.33 	&	5.02 	&&		&		\\
&14&	0.89 	&	1.59 	&	1.65 	&	2.30 	&	2.44 	&	3.03 	&	3.28 	&	3.79 	&	4.12 	&	4.58 	&&		&		\\
&15&	0.71 	&	1.20 	&	1.27 	&	1.68 	&	1.83 	&	2.18 	&	2.40 	&	2.73 	&	3.01 	&	3.36 	&&		&		\\
&16&	0.90 	&	1.60 	&	1.65 	&	2.31 	&	2.45 	&	3.04 	&	3.29 	&	3.80 	&	4.13 	&	4.59 	&&		&		\\
&17&	0.55 	&	0.90 	&	0.96 	&	1.25 	&	1.37 	&	1.62 	&	1.81 	&	2.08 	&	2.35 	&	2.70 	&&		&		\\
&18&	0.29 	&	0.46 	&	0.49 	&	0.64 	&	0.71 	&	0.85 	&	0.99 	&	1.19 	&	1.46 	&	1.91 	&&		&		\\

\hline\hline	
$\P{L=\ell}$ && 0.534& 0.044& 0.052& 0.023& 0.024& 0.020& 0.024& 0.031& 0.055& 0.191 \\		
\hline												
\end{tabular}
} %%%
\label{bm}
\end{table}

\begin{table}
\caption{The {\it a priori} rate function }\label{pri.rate.ftn}
\centering
\resizebox{\linewidth}{!}{%%% resize table
%\resizebox{!}{\textheight}{%%% resize table
\begin{tabular}{ l l*{25}{c}}
 \hline
 	 Risk group 	& 1& 2& 3& 4& 5& 6& 7& 8& 9& 10 & 11 & 12 & 13 & 14 & 15 & 16 & 17 & 18\\
\hline
$\tilde{\xi}_{[{\rm Shared.P}]}$ &
0.04 	&	0.17 	&	0.42 	&	0.10 	&	0.42 	&	1.05 	&	0.32 	&
1.33 	&	3.29 	&	0.08 	&	0.36 	&	0.88 	&	0.01 	&	0.05 	&
0.13 	&	0.05 	&	0.21 	&	0.52 	\\						
$\tilde{\xi}_{[{\rm Shared.F}]}$&
0.29 	&	0.61 	&	0.81 	&	0.54 	&	0.82 	&	1.35 	&	0.72 	&
1.59 	&	3.18 	&	0.50 	&	0.76 	&	1.21 	&	0.09 	&	0.36 	&
0.58 	&	0.35 	&	0.64 	&	0.90 	\\		
$\tilde{\xi}_{[{\rm Ind}]}$	 &
0.04 	&	0.17 	&	0.42 	&	0.10 	&	0.42 	&	1.05 	&	0.32 	&
1.33 	&	3.29 	&	0.08 	&	0.36 	&	0.88 	&	0.01 	&	0.05 	&
0.13 	&	0.05 	&	0.21 	&	0.52 	\\									
\hline												
\end{tabular}
} %%%
\end{table}

\section{Conclusion}
In this study, we revisit the double-counting problem in the BMS. We find that the problem originated from the inefficiency in optimization, with only the {\it a posteriori} function contributing. While the double-counting problem can be fully resolved by allowing individualized BM relativity tables for each policyholder, which we call the POI, such a BMS is complicated and unsuitable for effective communication between the policyholders and insurers.
By allowing for both the {\it a priori} and {\it a posteriori} rate functions to participate in the optimization process, which we call the PFOS, we show that the double-counting problem can be virtually resolved without complicating the traditional BMS. In our analysis, we found no significant difference between the PFOS and POI in terms of double counting (via FIX) or prediction ability (via HMSE). A future study may find it interesting to compare the PFOS and POI under various circumstances and identify the conditions under which two BMS show significantly different patterns both in terms of double counting and prediction ability.

\section*{Acknowledgements}
 Jae Youn Ahn was supported by a National Research Foundation of Korea (NRF) grant funded
by the Korean Government (NRF-2017R1D1A1B03032318).
Rosy Oh was supported by Basic Science Research Program through the National Research Foundation of Korea(NRF) funded by the Ministry of Education (Grant No. 2019R1A6A1A11051177).

\newpage

%\section*{References}
\bibliographystyle{apalike}
\bibliography{Bib_Oh}

\newpage

\appendix
\section{Optimization Algorithm}

From the optimization perspective, the two problems in \eqref{eq.a.7} and \eqref{eq.a.70} are equivalent, as shown in \eqref{eq.a3}. Specifically, the problem in \eqref{eq.a.70} can be obtained by solving the problem $(\tilde{\xi}_{[{\rm Shared.F}]}, \tilde{\gamma}_{[{\rm Shared.F}]})$ of \eqref{eq.a.7} by finding a constant $c>0$ satisfying
\[
c\, \tilde{\gamma}_{[{\rm Shared.F}]} \left( \left\lfloor \frac{z}{2} \right\rfloor\right)=q;
\]
in this case, the problem in \eqref{eq.a.70} is obtained as
\[
\left(\frac{1}{c}\tilde{\xi}_{[{\rm Shared.F}]}, c\, \tilde{\gamma}_{[{\rm Shared.F}]}\right).
\]
Thus, for the optimization of \eqref{eq.a.70}, we need to solve  \eqref{eq.a.7}. In the following algorithm, we adopt the {\it coordinate descent algorithm}, where the BM relativity and {\it a priori} rate functions are iteratively optimized.

\begin{algo}\label{algo.a.1}
  Under the {\it BMS with shared BM relativity table}, we can solve \eqref{eq.a.70} through iterative calculation using the following coordinate descent algorithm:
  \begin{enumerate}
    \item[i.] Initial step:
    \begin{itemize}
      \item $m \leftarrow 0$
      \item Define functions ${\gamma}_{[{\rm Shared}]}^{[m]}:\{1, \cdots, z\}\mapsto \Real_+$ and $\xi_{[{\rm Shared}]}^{[m]}: \{1, \cdots, \mathcal{K}\}\mapsto \Real_+$ as
   \[
  \gamma_{[{\rm Shared}]}^{[m]}(\ell):=\frac{\E{\Lambda^2 \Theta\big\vert {L}=\ell}}{\E{ \Lambda^2 \big\vert {L}=\ell}}, \quad \ell=1, \cdots, {z}
  \]
  and
  \[
  \xi_{[{\rm Shared}]}^{[m]}(\lambda_{\kappa}):=\lambda_{\kappa}, \quad \kappa=1, \cdots, \mathcal{K},
  \]
respectively.
    \end{itemize}

 \item[ii.] Repeat step:
 \begin{itemize}
   \item  For the given ${\gamma}_{[\rm Shared]}^{[m]}:\{1, \cdots, z\}\mapsto \Real_+$, define
     \begin{equation}\label{eq.a5}
  {\gamma}_{[{\rm Shared}]}^{[m+1]}(\ell):=\frac{\E{\xi_{[{\rm Shared}]}^{[m]}(\Lambda)\,\Lambda \Theta\big\vert {L}=\ell}}{\E{ \left( \xi_{[{\rm Shared}]}^{[m]}(\Lambda) \right)^2\big\vert {L}=\ell}}, \quad \ell=1, \cdots, {z},
  \end{equation}
  and for the given $\xi_{[\rm Shared]}^{[m]}: \{1, \cdots, \mathcal{K}\}\mapsto \Real_+$, define
 \begin{equation}\label{eq.a6}
  \xi_{[{\rm Shared}]}^{[m+1]}(\lambda_{\kappa}):=\lambda_{\kappa} \frac{\E{{\gamma}_{[\rm Shared]}^{[m+1]}({L})\, \Theta\big\vert \Lambda=\lambda_{\kappa}}}{\E{ \left( {\gamma}_{[\rm Shared]}^{[m+1]}({L}) \right)^2\big\vert \Lambda=\lambda_{\kappa}}}, \quad \kappa=1, \cdots, \mathcal{K}.
  \end{equation}
  \item $m \leftarrow m+1$

 \end{itemize}
\item[iii.] Termination and normalization
 \begin{itemize}
   \item Repeat step ii until the termination test is satisfied.
   \item Calculate the normalizing constant $c$ such that $c\,{\gamma}_{[Shared]}^{[m]}(\left\lfloor \frac{z}{2}\right\rfloor)=q$.
   \item The pair of functions
   $$c\,\gamma_{[{\rm Shared}]}^{[m]}\quad\hbox{and}\quad\frac{1}{c}\xi_{[{\rm Shared}]}^{[m]}$$
    is approximately the pair of functions $\tilde{\gamma}_{[{\rm Shared.F}]}$ and $\tilde{\xi}_{[{\rm Shared.F}]}$, which is the solution of \eqref{eq.a.70}.
 \end{itemize}

  \end{enumerate}
\end{algo}

The iterative updates of Step ii in Algorithm \ref{algo.a.1} come from Lemma \ref{lem.a1}, with the specific updating formula given in Lemma \ref{lem.a2}.

\begin{lemma}\label{lem.a1}
Algorithm \ref{algo.a.1} gives the following auxiliary results.
\begin{enumerate}
  \item[i.] For the given {\it a priori} rate function $\xi_{[\rm Shared]}^{[m]}$,
the BM relativity function ${\gamma}_{[{\rm Shared}]}^{[m+1]}$ in step ii of Algorithm \ref{algo.a.1} provides the solution $\gamma$ of the optimization problem
\[
\min_{ {\gamma} } \E{\left(\Lambda\Theta- \xi_{[\rm Shared]}^{[m]}(\Lambda){\gamma}({L}) \right)^2}.
\]
\item[ii.] For the BM relativity function ${\gamma}_{[{\rm Shared}]}^{[m+1]}$,
the {\it a priori} rate function $\xi_{[\rm Shared]}^{[m+1]}$ in step ii of Algorithm \ref{algo.a.1}
is the solution $\xi$ of the optimization problem
\[
\min_{ {\xi} } \E{\left(\Lambda\Theta- \xi(\Lambda){\gamma}_{[{\rm Shared}]}^{[m+1]}({L}) \right)^2}.
\]
\end{enumerate}
\end{lemma}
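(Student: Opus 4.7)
The plan is to reduce each half of the lemma to a collection of one-dimensional least-squares problems by conditioning, and then solve each by a first-order condition. In both cases the parameter to be optimized is a function whose values at distinct points enter the objective in decoupled ways once we condition correctly, which turns an infinite-dimensional optimization into a pointwise scalar problem.

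For part (i), with $\xi=\xi_{[{\rm Shared}]}^{[m]}$ held fixed, I would first write
\[
\E{\left(\Lambda\Theta-\xi(\Lambda)\gamma(L)\right)^2}=\sum_{\ell=1}^{z}\P{L=\ell}\,\E{\left(\Lambda\Theta-\xi(\Lambda)\gamma(\ell)\right)^2\big\vert L=\ell}.
\]
Since the unknowns $\gamma(1),\ldots,\gamma(z)$ each appear in a single summand, minimizing over the non-negative function $\gamma$ reduces to minimizing each conditional expectation separately over the scalar $\gamma(\ell)\ge 0$. Each such summand is a convex quadratic in $\gamma(\ell)$ with positive leading coefficient $\E{\xi(\Lambda)^2\big\vert L=\ell}$, so the unique unconstrained minimizer comes from the first-order condition
\[
\E{\xi(\Lambda)\bigl(\Lambda\Theta-\xi(\Lambda)\gamma(\ell)\bigr)\big\vert L=\ell}=0,
\]
which rearranges to exactly the formula \eqref{eq.a5}; the resulting ratio is automatically non-negative because $\xi$, $\Lambda$, and $\Theta$ all are, so feasibility for the non-negativity constraint is free.

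Part (ii) follows the same template, but with conditioning on $\Lambda$ instead of $L$. Fixing $\gamma={\gamma}_{[{\rm Shared}]}^{[m+1]}$ and decomposing
\[
\E{\left(\Lambda\Theta-\xi(\Lambda)\gamma(L)\right)^2}=\sum_{\kappa=1}^{\mathcal{K}} w_\kappa\,\E{\left(\Lambda\Theta-\xi(\lambda_\kappa)\gamma(L)\right)^2\big\vert \Lambda=\lambda_\kappa},
\]
each scalar $\xi(\lambda_\kappa)$ appears in only one summand as the argument of a convex quadratic. Setting the derivative with respect to $\xi(\lambda_\kappa)$ to zero and using the fact that conditioning on $\Lambda=\lambda_\kappa$ lets one pull $\Lambda$ out as the constant $\lambda_\kappa$ yields
\[
\xi(\lambda_\kappa)=\frac{\E{\gamma(L)\,\Lambda\Theta\big\vert \Lambda=\lambda_\kappa}}{\E{\gamma(L)^2\big\vert \Lambda=\lambda_\kappa}}=\lambda_\kappa\,\frac{\E{\gamma(L)\,\Theta\big\vert \Lambda=\lambda_\kappa}}{\E{\gamma(L)^2\big\vert \Lambda=\lambda_\kappa}},
\]
which is precisely \eqref{eq.a6}. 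The only mild delicacy, and the main thing I would verify, is that the denominators in both halves are strictly positive so the first-order conditions have well-defined unique solutions; this holds inductively because the iteration is started from the positive PNO initialization and the update formulas preserve positivity, so no genuine obstacle is anticipated.
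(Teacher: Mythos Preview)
Your proposal is correct and follows essentially the same approach as the paper: decompose the objective by conditioning on $L$ (for part~i) or on $\Lambda$ (for part~ii), then solve the resulting decoupled scalar quadratics via first-order conditions. In fact your write-up is more detailed than the paper's own proof, which simply remarks that part~i is analogous to the earlier derivation of $\tilde{\gamma}_{[{\rm Shared.P}]}$ and for part~ii writes the $\kappa$-decomposition and states the minimizer directly.
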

\begin{proof}
 The proof of part i is similar to deriving the optimal BM relativity of the {\it premium through partial optimization of the shared BM relativity table}. Finally, the optimization problem in part ii can be written as
\[
\begin{aligned}
\min_{ {\xi} } \E{\left(\Lambda\Theta- \xi(\Lambda){\gamma}_{[{\rm Shared}]}^{[m+1]}({L}) \right)^2}=
\min_{ {\xi} } \sum\limits_{\kappa=1}^{\mathcal{K}}\E{\left(\Lambda\Theta- \xi(\Lambda){\gamma}_{[{\rm Shared}]}^{[m+1]}({L}) \right)^2\big\vert \Lambda=\lambda_\kappa} w_\kappa,
\end{aligned}
\]
which in turn implies that
 \[
  \xi(\lambda_{\kappa})=\lambda_{\kappa} \frac{\E{{\gamma}_{[\rm Shared]}^{[m+1]}({L})\, \Theta\big\vert \Lambda=\lambda_{\kappa}}}{\E{ \left( {\gamma}_{[\rm Shared]}^{[m+1]}({L}) \right)^2\big\vert \Lambda=\lambda_{\kappa}}}
  \]
is the solution of the optimization problem of part ii.
\end{proof}

\begin{lemma}\label{lem.a2}
In step ii of Algorithm \ref{algo.a.1}, the updates in the coordinate algorithm can be calculated as follows.
\begin{enumerate}
  \item[i.] For the given {\it a priori} rate function ${\xi}_{[\rm Shared]}^{[m]}$, the update for the BM relativity function in \eqref{eq.a5} is obtained as
      \[
      {\gamma}_{[{\rm Shared}]}^{[m+1]}(\ell)=
      \frac{
      \sum\limits_{\kappa=1}^{\mathcal{K}}
      \lambda_\kappa  \xi_{[{\rm Shared}]}^{[m]}\left( \lambda_\kappa\right) w_\kappa
      \int\theta\,\pi_\ell\left(\lambda_\kappa\,\theta, \psi\right)g(\theta) {\rm d}\theta
      }
      {
      \sum\limits_{\kappa=1}^{\mathcal{K}}
      \left(\xi_{[{\rm Shared}]}^{[m]}\left( \lambda_\kappa\right)\right)^2 w_\kappa
      \int\theta\,\pi_\ell\left(\lambda_\kappa\,\theta, \psi\right)g(\theta) {\rm d}\theta.
      }
      \]
  \item[ii.] For the given BM relativity function ${\gamma}_{[\rm Shared]}^{[m+1]}$, the update for the {\it a priori} rate function in \eqref{eq.a6} is obtained as
      \[
      {\xi}_{[{\rm Shared}]}^{[m+1]}(\kappa)=
      \lambda_\kappa
      \frac{
      \sum\limits_{\ell=1}^{z}
        \gamma_{[{\rm Shared}]}^{[m+1]}\left( \ell\right)
      \int\theta\,\pi_\ell\left(\lambda_\kappa\,\theta, \psi\right)g(\theta) {\rm d}\theta
      }
      {
      \sum\limits_{\ell=1}^{z}
        \left(\gamma_{[{\rm Shared}]}^{[m+1]}\left( \ell\right)\right)^2
      \int \pi_\ell\left(\lambda_\kappa\,\theta, \psi\right)g(\theta) {\rm d}\theta      }.
      \]
\end{enumerate}
\end{lemma}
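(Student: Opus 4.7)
The plan is to plug the joint stationary law of $(\Lambda, \Theta, L)$ directly into the defining ratios \eqref{eq.a5} and \eqref{eq.a6} and then collect terms. Under the frequency random effects model, $\Lambda$ and $\Theta$ are independent with mass $w_\kappa$ and density $g$ respectively, and conditional on $(\Lambda,\Theta)=(\lambda_\kappa,\theta)$ the BM level in stationary state has probability mass $\pi_\ell(\lambda_\kappa\theta)$. Hence the joint law factors as
\[
\P{\Lambda=\lambda_\kappa,\,\Theta\in d\theta,\,L=\ell} \;=\; w_\kappa\,g(\theta)\,\pi_\ell(\lambda_\kappa\theta)\,d\theta,
\]
and this single identity is the only input the whole argument needs.

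For part (i), I would rewrite the numerator and denominator of \eqref{eq.a5} as mixed sums-and-integrals against this joint law, conditioned on $L=\ell$. Because $\xi_{[{\rm Shared}]}^{[m]}(\Lambda)\,\Lambda\,\Theta$ contributes a factor $\lambda_\kappa\,\xi_{[{\rm Shared}]}^{[m]}(\lambda_\kappa)\,\theta$ while $\bigl(\xi_{[{\rm Shared}]}^{[m]}(\Lambda)\bigr)^2$ contributes $\bigl(\xi_{[{\rm Shared}]}^{[m]}(\lambda_\kappa)\bigr)^2$, the normalising constant $\P{L=\ell}$ appears identically in both and cancels in the ratio. The remaining $\kappa$-sum and $\theta$-integral reproduce exactly the displayed formula for $\gamma_{[{\rm Shared}]}^{[m+1]}(\ell)$.

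For part (ii), I would condition instead on $\Lambda=\lambda_\kappa$. Since the marginal is $\P{\Lambda=\lambda_\kappa}=w_\kappa$, this constant cancels from numerator and denominator, and the conditional joint law of $(\Theta,L)$ reduces to $g(\theta)\,\pi_\ell(\lambda_\kappa\theta)\,d\theta$ on $\{1,\ldots,z\}\times\mathbb{R}_+$. Writing $\E{\gamma_{[{\rm Shared}]}^{[m+1]}(L)\,\Theta\,\big\vert\,\Lambda=\lambda_\kappa}$ and $\E{\bigl(\gamma_{[{\rm Shared}]}^{[m+1]}(L)\bigr)^2\,\big\vert\,\Lambda=\lambda_\kappa}$ as sums over $\ell$ of $\theta$-integrals against this conditional law, and substituting into \eqref{eq.a6}, gives the stated expression for $\xi_{[{\rm Shared}]}^{[m+1]}(\lambda_\kappa)$.

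There is no substantive obstacle: the proof is pure bookkeeping once the factorisation of the joint law is recorded. The whole argument relies only on the independence of $\Lambda$ and $\Theta$, on the Markov-chain stationarity embodied by $\pi_\ell$, and on elementary manipulation of conditional expectations via Bayes' rule and Fubini. The only minor point requiring care is to track which factors depend on $\theta$ and which do not, so that the correct integrals (with or without the extra factor $\theta$ in the integrand) appear in the numerator and denominator of each ratio.
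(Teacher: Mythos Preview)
Your proposal is correct and follows essentially the same route as the paper's own proof: the paper also writes the conditional expectations in \eqref{eq.a5} via the conditional density $f(\lambda_\kappa,\theta\mid\ell)$, invokes the factorisation $\pi_\ell(\lambda_\kappa\theta)\,w_\kappa\,g(\theta)$ so that the normalising constant $\P{L=\ell}$ cancels, and then states that part~(ii) is obtained analogously. Your closing remark about carefully tracking which integrands carry the extra factor $\theta$ is well taken.
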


\begin{proof}
  The proof of the first part follows from
  \[
  \begin{aligned}
      {\gamma}_{[{\rm Shared}]}^{[m+1]}(\ell)&=
      \frac{
      \sum\limits_{\kappa=1}^{\mathcal{K}}
      \int   \xi_{[{\rm Shared}]}^{[m]}\left( \lambda_\kappa\right)
      \lambda_\kappa\,\theta
      f\left(\lambda_\kappa, \theta |\ell \right){\rm d}\theta
      }
      {
      \sum\limits_{\kappa=1}^{\mathcal{K}}
      \int   \left(\xi_{[{\rm Shared}]}^{[m]}\left( \lambda_\kappa\right) \right)^2
      f\left(\lambda_\kappa, \theta | \ell \right){\rm d}\theta
      }\\
      &=
      \frac{
      \sum\limits_{\kappa=1}^{\mathcal{K}}
      \int
      \xi_{[{\rm Shared}]}^{[m]}\left( \lambda_\kappa\right) \lambda_\kappa\,
      \theta\,\pi_\ell\left(\lambda_\kappa\,\theta, \psi\right)w_\kappa g(\theta) {\rm d}\theta
      }
      {
      \sum\limits_{\kappa=1}^{\mathcal{K}}
      \int
      \left(\xi_{[{\rm Shared}]}^{[m]}\left( \lambda_\kappa\right)\right)^2
      \theta\,\pi_\ell\left(\lambda_\kappa\,\theta, \psi\right)w_\kappa g(\theta) {\rm d}\theta
      }\\
      &=
      \frac{
      \sum\limits_{\kappa=1}^{\mathcal{K}}
      \lambda_\kappa  \xi_{[{\rm Shared}]}^{[m]}\left( \lambda_\kappa\right) w_\kappa
      \int\theta\,\pi_\ell\left(\lambda_\kappa\,\theta, \psi\right)g(\theta) {\rm d}\theta
      }
      {
      \sum\limits_{\kappa=1}^{\mathcal{K}}
      \left(\xi_{[{\rm Shared}]}^{[m]}\left( \lambda_\kappa\right)\right)^2 w_\kappa
      \int\theta\,\pi_\ell\left(\lambda_\kappa\,\theta, \psi\right)g(\theta) {\rm d}\theta
      },
  \end{aligned}
  \]
  where $f\left(\lambda_\kappa, \theta |\ell \right)$ is the density function of $(\Lambda, \Theta)$ at
  $ (\Lambda, \Theta)=(\lambda_\kappa, \theta)$,
  conditional on $L=\ell$. The proof of the second part can be similarly obtained.
\end{proof}

Since a convex and differentiable function is guaranteed to converge to the global minimizer,
Lemma \ref{lem.a1} guarantees the solution in Algorithm \ref{algo.a.1} to converge to the global solution
for the objective function
\begin{equation*}
Q\left({\xi}_{[{\rm Shared}]}(\boldsymbol{x}_1), \cdots, {\xi}_{[{\rm Shared}]}(\boldsymbol{x}_{\kappa}),
{\gamma}_{[{\rm Shared}]}({1}), \cdots, {\gamma}_{[{\rm Shared}]}({z})
\right):=
 \E{\left(\Lambda\Theta- {\xi}_{[{\rm Shared}]}(\Lambda){\gamma}_{[{\rm Shared}]}({L}) \right)^2},
\end{equation*}
which can be easily shown as a convex function and is differentiable for each coordinate.
Refer to \citet{Luo1992} for more details of convergence in the coordinate descent algorithm.

%
%Define
%\begin{equation*}
%Q\left({\xi}_{[{\rm Shared}]}(\boldsymbol{x}_1), \cdots, {\xi}_{[{\rm Shared}]}(\boldsymbol{x}_{\kappa}),
%{\gamma}_{[{\rm Shared}]}({1}), \cdots, {\gamma}_{[{\rm Shared}]}({z})
%\right):=
% \E{\left(\Lambda\Theta- {\xi}_{[{\rm Shared}]}(\Lambda){\gamma}_{[{\rm Shared}]}({L}) \right)^2},
%\end{equation*}
%which can be easily shown as a convex function and is differentiable for each coordinate. Since a convex and differentiable function is guaranteed to converge to the global minimizer,
%Lemma \ref{lem.a1} guarantees the solution in Algorithm \ref{algo.a.1} to converge to the global solution. Refer to \citet{Luo1992} for more details of convergence in the coordinate descent algorithm.
%

\section{Miscellaneous Results}

\begin{lemma}\label{lem.a3}
For the premium $M(\Lambda, L)$ of the BMS, we have the following results:
\begin{enumerate}
  \item[i.] $\Var{\E{\frac{M(\Lambda, L)}{\Lambda}\bigg\vert \Lambda}} =
  \sum_{\kappa} w_{\kappa} \left\{\E{\frac{M(\Lambda, L)}{\Lambda}\bigg\vert \Lambda=\lambda_{\kappa}} - \E{\frac{M(\Lambda, L)}{\Lambda}} \right\}^2
  $.
  \item[ii.] $\E{\Var{\frac{M(\Lambda, L)}{\Lambda}\bigg\vert \Lambda}} =
  \sum_{\kappa} w_{\kappa} \sum_{\ell=1}^z \left\{\frac{ M(\lambda_{\kappa}, \ell)}{\lambda_{\kappa}}  - \E{\frac{M(\Lambda, L)}{\Lambda} \bigg\vert \Lambda=\lambda_{\kappa}} \right\}^2 \int \pi_\ell \left( \lambda_{\kappa}\theta,  \psi \right) g(\theta){\rm d}\theta
  $.
  \item[iii.] $ \E{\frac{M(\Lambda, L)}{\Lambda}  \bigg\vert \Lambda}  = \sum_{\ell =1}^{z}  \E{\frac{M(\Lambda, L)}{\Lambda}  \bigg\vert \Lambda, L=\ell} \int \pi_{\ell} \left( \Lambda\theta,  \psi \right) g(\theta){\rm d}\theta$.
  \item[iv.] $\E{\frac{M(\Lambda, L)}{\Lambda}  } = \sum\limits_{\kappa\in\mathcal{K}}  w_{\kappa} \E{\frac{M(\Lambda, L)}{\Lambda}  \big\vert \Lambda=\lambda_{\kappa}}$.
\end{enumerate}
\end{lemma}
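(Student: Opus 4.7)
The plan is to obtain the four identities as essentially routine applications of the law of total expectation and the variance decomposition, using the discrete distribution of $\Lambda$ and the conditional distribution of $L$ given $\Lambda$. Throughout, I will write $Y := M(\Lambda, L)/\Lambda$ for brevity and observe that $Y$ is a deterministic function of the pair $(\Lambda, L)$, so conditioning on both returns $Y$ itself.

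For parts (i) and (iv), I would simply unpack the definitions. Since $\P{\Lambda = \lambda_\kappa} = w_\kappa$ by \eqref{eq.1}, the law of total expectation gives part (iv) at once, namely $\E{Y} = \sum_\kappa w_\kappa \E{Y \vert \Lambda = \lambda_\kappa}$. Part (i) is then the definition of variance applied to the random variable $\E{Y \vert \Lambda}$, whose distribution is supported on the values $\E{Y \vert \Lambda = \lambda_\kappa}$ with probabilities $w_\kappa$; subtracting the mean computed in (iv) yields the stated expression.

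For part (iii), I would apply the tower property by further conditioning on $L$:
\[
\E{Y \vert \Lambda} = \sum_{\ell=1}^{z} \E{Y \vert \Lambda, L=\ell}\,\P{L=\ell \vert \Lambda},
\]
and then insert the conditional stationary probability $\P{L=\ell \vert \Lambda} = \int \pi_\ell(\Lambda\theta,\psi)\,g(\theta)\,{\rm d}\theta$, which follows directly from the definition of the stationary distribution of the BM-level Markov chain and the independence of $\Theta$ and $\Lambda$. Part (ii) is then obtained by combining the previous two observations: for each fixed $\Lambda = \lambda_\kappa$, the conditional variance $\Var{Y \vert \Lambda = \lambda_\kappa}$ is computed by summing over $\ell$ the squared deviations $\bigl(M(\lambda_\kappa,\ell)/\lambda_\kappa - \E{Y \vert \Lambda = \lambda_\kappa}\bigr)^2$ weighted by the conditional stationary probability from (iii); taking the outer expectation with respect to $\Lambda$ supplies the weights $w_\kappa$.

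There is no real obstacle here; the only point worth being careful about is the interpretation of $\E{Y \vert \Lambda, L = \ell}$ in part (iii), which reduces to $M(\Lambda,\ell)/\Lambda$ because $Y$ is a function of $(\Lambda, L)$. With that remark in place, all four identities are essentially bookkeeping of the discrete expectation and variance formulas under the two-layer conditioning $\Lambda \to L$.
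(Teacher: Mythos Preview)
Your proposal is correct and matches the paper's own treatment: the paper simply states ``We omit the proof for this since the results are self-explanatory,'' and your derivation via the tower property, the discrete law of $\Lambda$, and the conditional stationary distribution of $L$ given $\Lambda$ is exactly the routine unpacking the authors had in mind.
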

\begin{proof}
  We omit the proof for this since the results are self-explanatory.
\end{proof}

\begin{remark}\label{rem.1}
The FIX in \eqref{eq.17} can be compared to the fairness measure defined in
  \citet{Chong}.
  Specifically, \citet{Chong} defined the fairness measure as a variation in the conditional expectation of the {\it a priori} rate as follows:
 \begin{equation}\label{eq.28}
  \frac{\Var{\E{\Lambda\big\vert L}}}{\Var{\Lambda}}.
  \end{equation}
  Note that while the measure in \eqref{eq.28} can be used to measure the fairness of the BMS in general, it cannot distinguish between two BMSs having distinct BM relativity tables. Since we are interested in the comparison of the BMSs with various BM relativity tables, the fairness measure in \eqref{eq.28} is not suitable for our purpose.
\end{remark}

\section{Table}

\begin{table}[h!t!]
\centering
\caption{Observable policy characteristics used as covariates} \label{tab.x}
\begin{tabular}{l|l r r r r r r r }
\hline
Categorical & \multirow{2}{*}{Description} &&  \multicolumn{3}{c}{\multirow{2}{*}{Proportions}} \\
variables &  &  &  &   \\
\hline
Entity type   & Type of local government entity    \\
		& \quad\quad\quad\quad\quad\quad Miscellaneous  	&& \multicolumn{3}{c}{5.03$\%$} \\
		& \quad\quad\quad\quad\quad\quad City			&& \multicolumn{3}{c}{9.66$\%$} \\
		& \quad\quad\quad\quad\quad\quad County			&& \multicolumn{3}{c}{11.47$\%$} \\
		& \quad\quad\quad\quad\quad\quad School			&& \multicolumn{3}{c}{36.42$\%$} \\
		& \quad\quad\quad\quad\quad\quad Town			&& \multicolumn{3}{c}{16.90$\%$} \\
		& \quad\quad\quad\quad\quad\quad Village 			&& \multicolumn{3}{c}{20.52$\%$} \\
\hline
Coverage & Collision coverage amount for old and new vehicles.\\
		& \quad\quad\quad\quad\quad\quad Coverage $\in (0,\quad\,\, 0.14] = 1 $   && \multicolumn{3}{c}{33.40$\%$} \\
		& \quad\quad\quad\quad\quad\quad Coverage $\in (0.14, \, 0.74] = 2 $	&& \multicolumn{3}{c}{33.20$\%$} \\
		& \quad\quad\quad\quad\quad\quad Coverage $\in (0.74,\quad \infty] = 3$	&& \multicolumn{3}{c}{33.40$\%$} \\
\hline
\end{tabular}
\end{table}

\end{document}